\def\blindreview{}
\newif\ifTR
\newcounter{claimcounter}
\crefname{claimcounter}{Claim}{Claims}
\Crefname{algocf}{Algorithm}{Algorithms}
\crefname{figure}{Fig.}{Figs.}
\crefname{section}{Sec.}{Secs.}
\newcommand{\vh}[1]{\textcolor{orange}{\ifmmode \text{[#1]}\else [VH: #1] \fi}}
\newcommand{\ol}[1]{\textcolor{blue}{\ifmmode \text{[OL: #1]}\else [OL: #1] \fi}}
\newcommand{\bs}[1]{\textcolor{ForestGreen}{\ifmmode \text{[BS: #1]}\else [BS: #1] \fi}}
\newcommand{\blinded}[1]{\ifx\blindreview\undefined #1 \else \textcolor{black!65}{[blinded for review]}\fi}
\newcommand{\infofst}[1]{\inf_{Q,\delta}(#1)}
\newcommand{\restrof}[2]{#1 \raisebox{-.5ex}{$|$}_{#2}}
\newcommand{\aut}[0]{\mathcal{A}}
\newcommand{\autw}[0]{\mathcal{W}}
\newcommand{\autc}[0]{\mathcal{C}}
\newcommand{\pr}[0]{\mathit{pr}}
\newcommand{\sat}[0]{\mathit{sat}}
\newcommand{\transover}[1]{\overset{#1}{\rightarrow}}
\newcommand{\ltr}[1]{\transover{#1}}
\newcommand{\word}[0]{\alpha}
\newcommand{\wordof}[1]{\seqof{\word}{#1}}
\newcommand{\seqof}[2]{#1_{#2}}
\newcommand{\algmaxrank}[0]{\textsc{MaxRank}\xspace}
\newcommand{\lang}[0]{\mathcal{L}}
\newcommand{\langof}[1]{\lang(#1)}
\newcommand{\langautof}[2]{\lang_{#1}(#2)}
\newcommand{\accstates}{Q_F}
\newcommand{\acctrans}{\delta_F}
\newcommand{\trans}{\delta}
\newcommand{\simby}[0]{\preceq}
\newcommand{\fairsimby}[0]{\mathrel{\simby_{f}}}
\newcommand{\fairsimbyc}[0]{\mathrel{\simby_{f}^{\autc}}}
\newcommand{\dirsimbyw}[0]{\mathrel{\simby_{\mathit{di}}^{\autw}}}
\newcommand{\claimqed}[0]{\hfill $\blacksquare$}
\newcommand{\ncsbmaxrank}[0]{\text{NCSB-}\algmaxrank}
\newcommand{\ncsblazy}[0]{\text{NCSB-}\textsc{Lazy}\xspace}
\newcommand{\ranker}[0]{\textsc{Ranker}\xspace}
\newcommand{\rankerold}[0]{\textsc{Ranker}_{\textsc{Old}}\xspace}
\newcommand{\spot}[0]{\textsc{Spot}\xspace}
\newcommand{\seminator}[0]{\textsc{Seminator}~2\xspace}
\newcommand{\goal}[0]{\textsc{Goal}\xspace}
\newcommand{\roll}[0]{\textsc{Roll}\xspace}
\newcommand{\fribourg}[0]{\textsc{Fribourg}\xspace}
\newcommand{\piterman}[0]{\textsc{Piterman}\xspace}
\newcommand{\safra}[0]{\textsc{Safra}\xspace}
\newcommand{\autfilt}[0]{\texttt{autfilt}\xspace}
\newcommand{\ltldstar}[0]{\textsc{LTL2dstar}\xspace}
\newcommand{\goalmark}[0]{\scriptsize \faSoccerBallO}
\newcommand{\automizer}[0]{\textsc{Ultimate Automizer}\xspace}
\definecolor{rowgray}{gray}{0.85}
\newcommand{\monus}{\mathbin{\text{\@dotminus}}}
\newcommand{\@dotminus}{%
  \ooalign{\hidewidth\raise1ex\hbox{.}\hidewidth\cr$\m@th-$\cr}%
}
\newcommand{\dsrandom}[0]{\underline{\texttt{random}}\xspace}
\newcommand{\dsltl}[0]{\underline{\texttt{LTL}}\xspace}
\newcommand{\dsall}[0]{\underline{\texttt{all}}\xspace}
\newcommand{\dsboth}[0]{\underline{\texttt{both}}\xspace}
\newcommand{\dsautomizer}[0]{\underline{\texttt{Automizer}}\xspace}
\newcommand{\mihay}[0]{\textsc{MiHay}\xspace}
\newcommand{\cobapar}[0]{\mihay_\theta}
\newcommand{\cobapr}[0]{\mihay_\pr}
\newcommand{\cobasat}[0]{\mihay_\sat}
\newcommand{\delnondet}{{\color{blue}{\delta_1}}}
\newcommand{\deltrans}{{\color{red}{\delta_t}}}
\newcommand{\deldet}{{\color{green!60!black}{\delta_2}}}
\newcommand{\stnondet}{{\color{blue}Q_1}}
\newcommand{\stdet}{{\color{green!60!black}Q_2}}
\newcommand{\precircle}{node[left,shape=circle,inner sep=1pt,scale=0.7,fill=orange!60!black,text=white,draw=orange!60!black,xshift=-1mm,yshift=1mm]  {$1$}}
\newcommand{\textprecircle}{ (\tikz[baseline,anchor=base,scale=0.5]{ \draw \precircle;})}
\newcommand{\featcircle}{node[left,shape=circle,inner sep=1pt,scale=0.7,fill=orange!60!black,text=white,draw=orange!60!black,xshift=-1mm,yshift=1mm]  {$2$}}
\newcommand{\textfeatcircle}{ (\tikz[baseline,anchor=base,scale=0.5]{ \draw \featcircle;})}
\newcommand{\postcircle}{node[left,shape=circle,inner sep=1pt,scale=0.7,fill=orange!60!black,text=white,draw=orange!60!black,xshift=-1mm,yshift=1mm]  {$3$}}
\newcommand{\textpostcircle}{ (\tikz[baseline,anchor=base,scale=0.5]{ \draw \postcircle;})}
\newcommand{\iwacircle}{node[left,shape=circle,inner sep=1pt,scale=0.7,fill=orange!60!black,text=white,draw=orange!60!black,xshift=-1mm,yshift=1mm]  {A}}
\newcommand{\textiwacircle}{ (\tikz[baseline,anchor=base,scale=0.5]{ \draw \iwacircle;})}
\newcommand{\sdcircle}{node[left,shape=circle,inner sep=1pt,scale=0.7,fill=orange!60!black,text=white,draw=orange!60!black,xshift=-1mm,yshift=1mm]  {B}}
\newcommand{\textsdcircle}{ (\tikz[baseline,anchor=base,scale=0.5]{ \draw \sdcircle;})}
\newcommand{\gencircle}{node[left,shape=circle,inner sep=1pt,scale=0.7,fill=orange!60!black,text=white,draw=orange!60!black,xshift=-1mm,yshift=1mm]  {C}}
\newcommand{\textgencircle}{ (\tikz[baseline,anchor=base,scale=0.5]{ \draw \gencircle;})}
\newif\ifTR
\title{Complementing B\"{u}chi Automata with \ranker (Technical Report)}
\author{
  Vojt\v{e}ch Havlena \and
  Ond\v{r}ej Leng\'{a}l \and
	Barbora \v{S}mahl\'{i}kov\'{a}
  }
\institute{
  Faculty of Information Technology,
  Brno University of Technology,
  Czech Republic
}
\begin{document}

\maketitle

\begin{abstract}
  We present the tool \ranker for complementing B\"{u}chi automata (BAs).
  \ranker builds on our previous optimizations of rank-based BA
  complementation and pushes them even further using numerous heuristics to
  produce even smaller automata.
  Moreover, it contains novel optimizations of specialized constructions for
  complementing
  (i)~inherently weak automata and
  (ii)~semi-deterministic automata, all delivered in a~robust tool.
  The optimizations significantly improve the usability of
  \ranker, as shown in an extensive experimental evaluation with
  real-world benchmarks, where \ranker produced in the majority of cases
  a~strictly smaller complement than other state-of-the-art tools.
\end{abstract}

\vspace{-2.0mm}
\section{Introduction}
\vspace{-0.0mm}

%
\noindent
B\"{u}chi automata (BA) complementation is an essential operation in the toolbox
of automata theory, logic, and formal methods.
It has many applications, e.g., implementing negation in decision procedures of
some logics (such as
the monadic second-order logic S1S~\cite{buchi1962decision,HLS-S1S},
the temporal logics EPTL and QPTL~\cite{sistla1987complementation}, or
the first-order logic over Sturmian words~\cite{pecan}),
proving termination of
programs~\cite{fogarty2009buchi,heizmann2014termination,ChenHLLTTZ18}, or model
checking of temporal properties~\cite{VardiW86}.
BA complementation also serves as the foundation stone of algorithms
for checking inclusion and equivalence of $\omega$-regular languages.
In all applications of BAs, the number of states of a~BA affects the overall
performance.
The many uses of BA complementation, as well as the challenging theoretical
nature of the problem, has incited researchers to develop a~number of different
approaches, e.g.,
\emph{determinization-based}~\cite{safra1988complexity,piterman2006nondeterministic,Redziejowski12},
\emph{rank-based}~\cite{KupfermanV01,FriedgutKV06,Schewe09}, or
\emph{Ramsey-based}~\cite{buchi1962decision,breuers-improved-ramsey},
some of them~\cite{fribourg,Schewe09} producing BAs with the number of states
asymptotically matching the lower bound $(0.76n)^n$ of Yan~\cite{yan}.
Despite their theoretical optimality, for many real-world
cases the constructions  create BAs with a~lot of unnecessary states, so
optimizations making the algorithms efficient in practice are
needed.

We present \ranker, a~robust tool for complementing (transition-based)~BAs.
\ranker uses several complementation approaches based on properties of the
input BA: it combines an optimization of the rank-based procedure developed
in~\cite{HavlenaL2021,HavlenaLS22,ChenHL19} with specialized (and further
optimized) procedures for complementing
semi-deterministic BAs~\cite{BlahoudekHSST16},
inherently weak BAs~\cite{Miyano84,BoigelotJW01}, and
elevator BAs~\cite{HavlenaLS22}.
An extensive experimental evaluation on a~wide range of automata
occurring in practice shows that \ranker can obtain a~smaller complement in
the majority of cases compared to the other state-of-the-art tools.

\renewcommand{\featcircle}{}
\renewcommand{\textfeatcircle}{}
\renewcommand{\postcircle}{}
\renewcommand{\textpostcircle}{}
\renewcommand{\precircle}{}
\renewcommand{\textprecircle}{}
\renewcommand{\iwacircle}{}
\renewcommand{\textiwacircle}{}
\renewcommand{\sdcircle}{}
\renewcommand{\textsdcircle}{}
\renewcommand{\gencircle}{}
\renewcommand{\textgencircle}{}

\paragraph{Contribution.}
We describe a~major improvement of
\ranker~\cite{HavlenaL2021,HavlenaLS22}, turning~it from a~prototype into a
robust tool.
We list
the particular optimizations below.
\begin{itemize}
	\item We extended the original BA complementation procedure
  with improved deelevation (cf.\ \cite{HavlenaLS22}) and advanced
  automata reductions.

  \item We also equipped \ranker with specialized constructions tailored for
    widely-used semi-deterministic and inherently weak automata.

  \item On top of that, we propose novel optimizations of the original NCSB
  construction for semi-deterministic BAs and a~simulation-based optimization of
    the Miyano-Hayashi algorithm for complementing inherently weak automata.
\end{itemize}
All of these improvements
are pushing the capabilities of \ranker, and also of practical BA complementation
itself, much further.


\vspace{-2.0mm}
\section{B\"{u}chi Automata}
\vspace{-0.0mm}

\paragraph{Words, functions.}
We fix a~finite nonempty alphabet~$\Sigma$ and the first infinite ordinal
$\omega = \{0, 1, \ldots\}$.
%
An (infinite) word~$\word$ is
a~function $\word\colon \omega \to \Sigma$ where the $i$-th
symbol is denoted as $\wordof i$. We~abuse notation and sometimes
represent~$\word$ as an~infinite sequence $\word = \wordof 0 \wordof 1 \dots$
$\Sigma^\omega$ denotes the set of all infinite words over~$\Sigma$.
%


\vspace{-1mm}
\paragraph{B\"{u}chi automata.}
A~(nondeterministic transition/state-based) \emph{B\"{u}chi
automaton} (BA) over~$\Sigma$ is a~quintuple $\aut = (Q, \trans, I,
\accstates, \acctrans)$ where $Q$ is a~finite set of \emph{states},
$\trans\colon Q \times \Sigma \to 2^Q$ is a~\emph{transition function}, $I
\subseteq Q$ is the sets of \emph{initial} states, and $\accstates \subseteq Q$
and $\acctrans \subseteq \trans$ are the sets of \emph{accepting states} and
\emph{accepting transitions} respectively. $\aut$ is called deterministic if
$|I|\leq 1$ and $|\delta(q,a)|\leq 1$ for each $q\in Q$ and $a \in \Sigma$.
We sometimes treat~$\trans$ as a~set of transitions $p \ltr a q$, for instance,
we use $p \ltr a q \in \trans$ to denote that $q \in \trans(p, a)$. Moreover, we
extend $\trans$ to sets of states $P \subseteq Q$ as $\trans(P, a) = \bigcup_{p
\in P} \trans(p,a)$.
The notation $\restrof \trans S$ for $S \subseteq Q$ is used to denote the
restriction of the transition function $\trans \cap (S \times \Sigma \times S)$.
Moreover, for $q \in Q$, we use $\aut[q]$ to denote the automaton $(Q, \trans, \{q\},
\accstates, \acctrans)$.

A~\emph{run}
of~$\aut$ from~$q \in Q$ on an input word $\word$ is an infinite sequence $\rho\colon
\omega \to Q$ that starts in~$q$ and respects~$\trans$, i.e., $\rho_0 = q$ and
$\forall i \geq 0\colon \rho_i \ltr{\wordof i}\rho_{i+1} \in \trans$.
Let $\infofst \rho \subseteq Q \cup \delta$ denote the set of states and transitions occurring in~$\rho$ infinitely often.
%
The run~$\rho$ is called \emph{accepting} iff $\infofst
\rho \cap (\accstates \cup \acctrans) \neq \emptyset$.
A~word~$\word$ is \emph{accepted by~$\aut$ from a~state~$q \in Q$} if $\aut$ has an
accepting run~$\rho$ on~$\word$ from~$q$, i.e., $\rho_0 = q$.
The set
$\langautof{\aut} q = \{\word \in \Sigma^\omega \mid \aut \text{ accepts } \word
\text{ from } q\}$ is called the \emph{language} of~$q$ (in~$\aut$). Given a~set
of states~$R \subseteq Q$, we define the language of~$R$ as $\langautof \aut R =
\bigcup_{q \in R} \langautof \aut q$ and the language of~$\aut$ as~$\langof \aut =
\langautof \aut I$.
If $\acctrans = \emptyset$, we call~$\aut$ \emph{state-based} and
if $\accstates = \emptyset$, we call~$\aut$ \emph{transition-based}.

A~\emph{co-B\"{u}chi automaton} (co-BA)~$\autc$ is the same as a~BA
except the definition of when a~run is accepting: a~run~$\rho$ of~$\autc$ is
\emph{accepting} iff $\infofst \rho \cap (\accstates \cup \acctrans) = \emptyset$.

\vspace{-1mm}
\paragraph{Automata types.}
Let $\aut = (Q, \trans, I, \accstates, \acctrans)$ be a BA. $C \subseteq Q$ is
a~\emph{strongly connected component} (SCC) of~$\aut$ if for any pair of states
$q, q' \in C$ it holds that~$q$ is reachable from~$q'$ and~$q'$ is reachable
from~$q$. $C$~is \emph{maximal} (MSCC) if it is not a~proper subset of another
SCC. An MSCC is \emph{non-accepting} if it contains no accepting state and no
accepting transition.
We say that an SCC~$C$ is \emph{inherently weak accepting} (IWA) iff \emph{every
cycle} in the transition diagram of~$\aut$ restricted to~$C$ contains an
accepting state or an accepting transition.
We say that an SCC~$C$ is \emph{deterministic} iff $(C, \restrof \trans C,
\emptyset, \emptyset, \emptyset)$ is deterministic.
$\aut$ is \emph{inherently weak} (IW) if all its MSCCs are inherently weak accepting
or non-accepting, and \emph{weak} if for states $q, q'$ that belong to the same
SCC, $q \in \accstates$ iff $q' \in \accstates$.
$\aut$ is \emph{semi-deterministic} (SDBA) if $\aut[q]$ is deterministic for every
$q \in \accstates \cup \{p \in Q \mid s \ltr a p \in \acctrans, s \in Q, a \in
\Sigma\}$.
Finally, $\aut$ is called \emph{elevator} if all its MSCCs are
inherently weak accepting, deterministic, or non-accepting.

\vspace{-2.0mm}
\section{Architecture}
\vspace{-0.0mm}

\ranker~\cite{ranker} is a~publicly available command line tool, written in C++,
implementing several approaches for complementation of (transition/state-based)
B\"{u}chi automata.
As an input,
\ranker accepts BAs in the HOA~\cite{BabiakBDKKM0S15} or
the simpler \texttt{ba}~\cite{abdulla2010simulation} format.
The architecture overview
is shown in Fig.~\ref{fig:ranker-arch}.
An input automaton is first adjusted by various
structural preprocessing steps to an intermediate equivalent automaton with a
form suitable for a complementation procedure. Based on the intermediate
automaton type, a concrete complementation procedure is used.
The result of the complementation is
subsequently polished by postprocessing steps, yielding an automaton on the
output. In the following text, we provide details about the internal blocks of
\ranker's architecture.

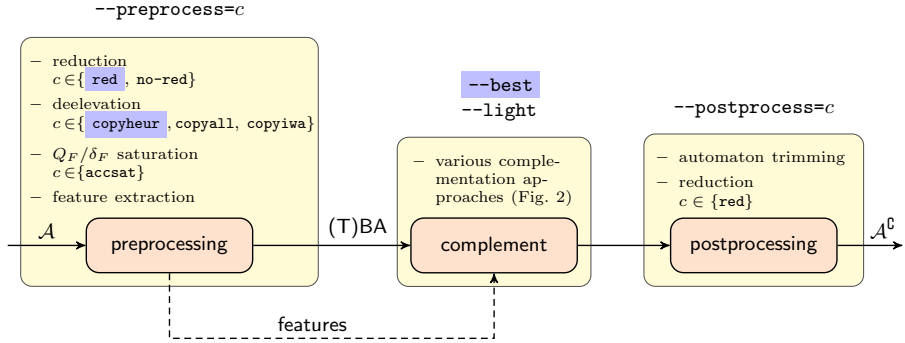
\begin{figure}[t]
	\centering
  \resizebox{\textwidth}{!}{
		\begin{tikzpicture}[->,>=stealth',scale=1,transform shape,semithick]
\tikzstyle{every state}=[inner sep=3pt,minimum size=5pt]

  \tikzstyle{decision}=[draw, diamond, aspect=2, inner sep=1pt,fill=Yellow!50]
  \tikzstyle{box}=[draw,rounded corners=2mm, inner sep=8pt,text width=2cm,align=center, node distance=4cm,fill=Orange!20]

  \node[box] (pre) {\textsf{preprocessing}};
  \node[box, right of=pre, xshift=10mm, yshift=0mm] (comp2) {\textsf{complement}};
  \node[box, right of=comp2] (post) {\textsf{postprocessing}};

  \path (pre.south east)[xshift=1.8mm,yshift=-0.5mm] \precircle;
  \path (post.south east)[xshift=1.8mm,yshift=-0.5mm] \postcircle;

  \node[above of=pre,text width=4.8cm, align=center, node distance=1.8cm, xshift=0mm] (pre-text) {
    \vbox{\scriptsize{\begin{itemize}
      \item reduction \\$c\,{\in}\{${\colorbox{blue!25!white}{\texttt{red}}}, \texttt{no-red}$\}$
      \smallskip
      \item deelevation\\$c\,{\in}\{${\colorbox{blue!25!white}{\texttt{copyheur}}},\,\texttt{copyall}, \texttt{copyiwa}$\}$
      \smallskip
      \item $\accstates$/$\acctrans$ saturation \\$c\,{\in}\{$\texttt{accsat}$\}$
      \smallskip
      \item feature extraction
    \end{itemize}}}
  };

  \node[above of=comp2,text width=3.0cm, align=center, node distance=1cm, xshift=0mm] (comp-text) {
    \vbox{\scriptsize{\begin{itemize}
      \item various complementation approaches (Fig.~\ref{fig:compl-scheme})
    \end{itemize}}}
  };

  \node[above of=post,text width=3.5cm, align=center, node distance=1.0cm, xshift=0mm] (post-text) {
    \vbox{\scriptsize{\begin{itemize}
      \item automaton trimming
      \smallskip
      \item reduction \\$c\in\{$\texttt{red}$\}$
    \end{itemize}}}
  };

  \node[above of=pre-text, node distance=1.8cm] (cmd-pre) {\texttt{--preprocess=}$c$};
  \node[above of=comp-text, node distance=1.3cm, text width=1cm] (cmd-comp) {{\colorbox{blue!25!white}{\texttt{--best}}} \texttt{--light}};
  \node[above of=post-text, node distance=1.1cm] (cmd-post) {\texttt{--postprocess=}$c$};

  \draw[->] (pre.east) -- node[above,xshift=4mm] {\textsf{(T)BA}} (comp2.west);
  \draw[->] (comp2.east) -- (post.west);
  \draw[->] ([xshift=-12mm]pre.west) -- node[above] {$\aut$} (pre.west);
  \draw[->] (post.east) -- node[above,xshift=2mm] {$\aut^\complement$} ([xshift=10mm]post.east);
  \draw[->, densely dashed] (pre.south) |- node[above, xshift=22mm,solid] (features) {\textsf{features}} ([yshift=-10mm]comp2.south) -- (comp2.south);

  \path (features)[xshift=10mm,yshift=-1mm] \featcircle;

  \begin{pgfonlayer}{background}
    \draw[-,very thin,rectangle,fill=Yellow!20,draw=black!70,rounded corners=5pt,inner sep=2pt]
      ([xshift=-10mm]pre.west) |- ([yshift=-2mm]pre-text.north) -| ([xshift=10mm,yshift=0mm]pre.east) |- ([yshift=-2mm]pre.south west) -| ([xshift=-10mm]pre.west);
    \draw[-,very thin,rectangle,fill=Yellow!20,draw=black!70,rounded corners=5pt,inner sep=2pt]
      ([xshift=-2mm]comp2.west) |- ([yshift=-2mm]comp-text.north) -| ([xshift=2mm,yshift=0mm]comp2.east) |- ([yshift=-2mm]comp2.south) -| ([xshift=-2mm]comp2.west);
    \draw[-,very thin,rectangle,fill=Yellow!20,draw=black!70,rounded corners=5pt,inner sep=2pt]
      ([xshift=-4mm]post.west) |- ([yshift=-2mm]post-text.north) -| ([xshift=4mm,yshift=0mm]post.east) |- ([yshift=-2mm]post.south west) -| ([xshift=-4mm]post.west);
  \end{pgfonlayer}

\end{tikzpicture}
  }
	\caption{
    Overview of the architecture of \ranker with the most important command-line
    options.
    Default settings are highlighted in blue.
	}
	\label{fig:ranker-arch}
\end{figure}

\vspace{-0.0mm}
\subsection{Preprocessing and Postprocessing}
\vspace{-0.0mm}

Before an input BA is sent to the complementation block itself, it is first
transformed into a~form most suitable for a concrete complementation technique.
On top of
that as a part of preprocessing, we identify structural features that are
further used to enabling/disabling certain optimizations during the
complementation.
After the complementation, the resulting automaton is optionally
reduced in a postprocessing step.
\ranker provides several options of preprocessing/postprocessing that are
discussed below.

\medskip
\noindent
\emph{Preprocessing}\textprecircle.
The following are the most important settings for preprocessing:

\begin{itemize}
  \item \emph{Reduction}: In order to obtain a~smaller automaton, reduction
    using \emph{direct simulation}~\cite{MayrC13} can be applied
    (\texttt{--preprocess=red}).
    Moreover, if the input automaton is IW or SDBA, we transform it into
    a~transition-based BA, which might be smaller (we only do local
    modifications and merge two states if they have the same successors while
    moving the acceptance condition from states to transitions entering
    accepting states).
    We, however, do not use this strategy for other BAs, because despite their
    possibly more compact representation, this reduction limits the effect of some
    optimizations used in the rank-based complementation procedure (the presence
    of accepting states allows to decrease the rank bound, cf.\ \cite{HavlenaLS22}).

  \item \emph{Deelevation}~\cite{HavlenaLS22}: For elevator automata,
    \ranker supports a~couple of deelevation strategies (extending a~basic
    version introduced in~\cite{HavlenaLS22}).
    Roughly speaking, deelevation makes a~copy of MSCCs such that each
    copied MSCC becomes a~terminal component (i.e., no run can leave it) and
    accepting states/transitions are removed from the original
    component (we call this the \emph{deelevation} of the
    component).
    Deelevation increases the number of states but decreases the rank bounds for
    rank-based complementation.
    \ranker offers several strategies that differ on which components are
    deelevated:
    \begin{itemize}
      \item  \texttt{--preprocess=copyall}: Every component is deelevated.
      \item  \texttt{--preprocess=copyiwa}: Only IWA components are deelevated.
      \item  \texttt{--preprocess=copyheur}: This option combines two
        modifications applied in sequence:
        \begin{inparaenum}[(i)]
          \item  If the input BA is not IW and the rank bound
            estimation~\cite{HavlenaLS22} of the BA is at least~5, then all
            MSCCs with an accepting state/transition are deelevated (the higher
            rank bound indicates a~longer sequence of components, for which
            deelevation is likely to be benefical).
          \item If on all paths from all initial states of the intermediate BA,
            the first non-trivial MSCC is non-accepting, then we partially
            determinize the initial part of the BA (up to the first non-trivial
            MSCCs); this reduces sizes of macrostates obtained in rank-based
            complementation.
        \end{inparaenum}
    \end{itemize}
  \item  \emph{Saturation of accepting states/transitions}:
    Since a~higher number of accepting states and transitions can help the
    rank-based complementation procedure, \ranker can (using
    \texttt{--preprocess=accsat}) saturate accepting states/transitions in the
    input BA (while preserving the language).
    This is, however, not always beneficial; for instance, saturation
    can break the structure for elevator rank estimation
    (cf.~\cite{HavlenaLS22}).
  \item  \emph{Feature extraction}\textfeatcircle:
    During preprocessing, we extract features of the BA that can help the complementation
    procedure in the second step.
    The features are, e.g., the type of the BA, rank bounds for individual
    states~\cite{HavlenaLS22}, or settings of particular optimizations
    from~\cite{HavlenaL2021} (e.g., for deterministic automata with
    a~smaller rank bound, it is counter-productive to use techniques reducing
    the rank bound based on reasoning about the waiting part).
\end{itemize}

\medskip
\noindent
\emph{Postprocessing}\textpostcircle.
After the complementation procedure finishes, \ranker removes useless states and
optionally applies simulation reduction (\texttt{--postprocess=red}).

\vspace{-2.0mm}
\subsection{Complementation Approaches}
\vspace{-0.0mm}

Based on the automaton type,
\ranker uses several approaches for complementation (cf.\
\cref{fig:compl-scheme}).
These are, ordered by decreasing priority, the following:
%
\begin{figure}[t]
	\centering
	\scalebox{0.8}{
		\begin{tikzpicture}[->,>=stealth',scale=1,transform shape,semithick]
\tikzstyle{every state}=[inner sep=3pt,minimum size=5pt]

  \tikzstyle{decision}=[draw, diamond, aspect=2, inner sep=1pt,fill=Orange!20]
  \tikzstyle{box}=[draw,rounded corners=2mm, inner sep=0pt,text width=6.2cm,align=center, node distance=1.75cm,fill=Yellow!20]

  \node[box] (iw) {
    \vbox{\scriptsize{\begin{itemize}
      \item Miyano-Hayashi construction~\cite{Miyano84}
      \item Macrostates simulation-based pruning/saturation optimization (Sec.~\ref{sec:iw-sim-opt})
    \end{itemize}}}
  };
  \node[box,below of=iw] (sd) {
    \vbox{\scriptsize{\begin{itemize}
      \item \ncsblazy construction~\cite{ChenHLLTTZ18}
      \item \ncsbmaxrank construction (Sec.~\ref{sec:ncsb-maxrank})
      \item Optimized Rank-based construction~\cite{HavlenaL2021,HavlenaLS22}
    \end{itemize}}}
  };
  \node[box,below of=sd, node distance=1.6cm] (gen) {
    \vbox{\scriptsize{\begin{itemize}
      \item Optimized Rank-based construction~\cite{HavlenaL2021,HavlenaLS22}
      \item Backoff: \spot~\cite{HavlenaL2021}
    \end{itemize}}}
  };

  \node[draw=black,circle,inner sep=2pt,right of=sd, node distance=4cm] (join) {};
  \node[inner sep=1pt,right of=join, node distance=10mm] (join2) {};
  \node[decision,left of=sd, node distance=5.5cm] (type) {\textsf{type}};

  \path (iw.south east) [xshift=2mm,yshift=-0.5mm]\iwacircle;
  \path (sd.south east) [xshift=2mm,yshift=-0.5mm] \sdcircle;
  \path (gen.south east) [xshift=2mm,yshift=-0.5mm] \gencircle;

  %
  %

  \draw[->] (type.north) |- node[pos=0.74,above] {\textsf{inherently weak}} (iw.west);
  \draw[->] (type.east) -- node[pos=0.4,above] {\textsf{SDBA}} (sd.west);
  \draw[->] (type.south) |- node[pos=0.73,below] {\textsf{otherwise}} (gen.west);
  \draw[->] ([xshift=-4mm]type.west) -- (type.west);

  \draw[->] (sd.east) -- (join);
  \draw[->] (iw.east) -| (join);
  \draw[->] (gen.east) -| (join);
  \draw[->] (join) -- (join2);


\end{tikzpicture}
	}
  \vspace{-0mm}
	\caption{
	Overview of complementation approaches used in \ranker.
	}
	\label{fig:compl-scheme}
  \vspace*{-2mm}
\end{figure}
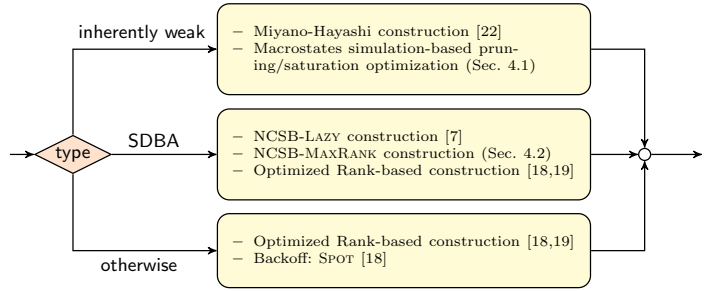

\begin{itemize}
  \item  \emph{Inherently weak BAs}\textiwacircle:
    For the complementation of inherently weak automata, both the Miyano-Hayashi
    construction~\cite{Miyano84} and its optimization of adjusting macro\-states
    (described in \cref{sec:iw-sim-opt}), are implemented.
    The construction converts an input automaton into an intermediate equivalent
    co-B\"{u}chi automaton, which is then complemented.
    The implemented optimizations adjust macrostates of the Miyano-Hayashi
    construction according to a~direct simulation relation.
    By default (\texttt{--best}), the Miyano-Hayashi construction and the
    optimization of pruning simulation-smaller states from macrostates are used
    and the smaller result is output.
    For the option \texttt{--light}, only the optimized construction is used.

  \item  \emph{Semi-deterministic BA}\textsdcircle:
    For SDBAs, \ranker by default (\texttt{--best}) uses both an
    NCSB-based~\cite{BlahoudekHSST16} procedure and an optimized rank-based
    construction with advanced rank estimation~\cite{HavlenaL2021,HavlenaLS22};
    the smaller result is picked.
    The particular NCSB-based procedure used is \ncsbmaxrank from
    \cref{sec:ncsb-maxrank} (\ranker also contains an implementation of
    \ncsblazy from~\cite{ChenHLLTTZ18}, which can be turned on using
    \texttt{--ncsb-lazy}, but usually gives worse results).
    For the option \texttt{--light}, only \ncsbmaxrank is used.

  \item  \emph{Otherwise}\textgencircle:
    For BAs with no special structure, \ranker uses the optimized
    rank-based complementation algorithm from~\cite{HavlenaL2021,HavlenaLS22}
    with \spot as the backoff~\cite{HavlenaL2021} (i.e., \ranker can determine
    when the input has a~structure that is bad for the rank-based procedure and
    use another approach).
    Particular optimizations are selected according to the features of the input
    BA (e.g., the number of states or the structure of the automaton).
\end{itemize}

\vspace{-2.0mm}
\section{Optimizations of the Constructions}
\label{sec:optimizations}
\vspace{-0.0mm}

In this section, we provide details about new optimizations of complementation
of inherently weak and semi-deterministic automata implemented in \ranker.
Proofs of their correctness can be found in the Appendix.

%

\vspace{-2.0mm}
\subsection{Macrostates Adjustment for Inherently Weak Automata}
\label{sec:iw-sim-opt}
\vspace{-0.0mm}

For complementing IW automata, \ranker uses a~method based on the Miyano-Hayashi
construction (denoted as \mihay)~\cite{Miyano84}:
In the first step, accepting states of an input IW BA~$\aut$ are saturated to
obtain a~language-equivalent weak automaton $\autw = (Q, \delta, I, \accstates,
\emptyset)$ (we remove accepting transitions because they do not provide any
advantage for IW automata).
In the second step, $\autw$ is converted to the
equivalent co-B\"{u}chi automaton $\autc = (Q, \delta, I, \accstates' =
Q\setminus \accstates, \emptyset)$ by swapping accepting and non-accepting
states.
Finally, the Miyano-Hayashi construction is used to obtain the complement
(state-based) BA.



Our optimizations of the \mihay procedure are inspired by optimizations of the
determinization algorithm for automata over finite words~\cite{GlabbeekP08} and
by saturation of macrostates in rank-based BA complementation
procedure~\cite{ChenHL19}, where simulation relations are used to adjust
macrostates in order to obtain a~smaller automaton.
We modify the original construction by introducing an \emph{adjustment function}
that modifies obtained macrostates, either to obtain \emph{smaller} macrostates
(for \emph{pruning} strategy) or \emph{larger} macrostates (for
\emph{saturating} strategy; the hope is that \emph{more} original macrostates
map to \emph{the same} saturated macrostate).
Formally, given a~co-BA~$\autc$ and an \emph{adjustment
function} $\theta\colon 2^Q \to 2^Q$, the construction $\cobapar$ gives the
(deterministic, state-based) BA $\cobapar(\autc) = (Q', \delta', I',
\accstates', \emptyset)$, whose components are defined as follows:
\begin{itemize}
	\item $Q' = 2^Q \times 2^Q$,
	\item $I' = \{(\theta(I), \theta(I) \setminus \accstates')\}$,
	\item $\delta'((S, B), a) = (S', B')$ where
	\begin{itemize}
		\item $S' = \theta(\delta(S, a))$,
		\item and
		\begin{itemize}
			\item $B' = S' \setminus \accstates'$ if $B = \emptyset$ or
			\item $B' = (\delta(B, a) \cap S') \setminus \accstates'$ if $B \neq \emptyset$, and
		\end{itemize}
	\end{itemize}
	\item $F' = 2^Q \times \{\emptyset\}$.
\end{itemize}
Intuitively, the construction tracks in the~$S$-component all runs over a~word
and uses the~$B$-component to check that each of the runs sees infinitely many
accepting states from~$\accstates'$ (by a~cut-point construction).
The original \mihay procedure can be obtained by using identity for the
adjustment function, $\theta = \mathrm{id}$.



In the following, we use~$\dirsimbyw$
and~$\fairsimbyc$ to denote a~\emph{direct simulation} on~$\autw$ and
a~\emph{fair simulation} on~$\autc$ respectively (see,
e.g.,~\cite{etessami2002hierarchy} for more details; in particular, $p
\fairsimbyc q$ iff for every trace of~$\autc$ from state~$p$ over~$\word$
with finitely many accepting states, there exists a~trace from~$q$ with
finitely many accepting states over~$\word$).

Let ${\sqsubseteq}\subseteq Q\times Q$ be a~relation on the states of~$\autc$
defined as follows:
$p \sqsubseteq q$ iff
\begin{inparaenum}[(i)]
	\item $p \fairsimbyc q$,
	\item $q$ is reachable from~$p$ in~$\autc$, and
	\item either $p$ is not reachable from~$q$ in~$\autc$ or $p=q$.
\end{inparaenum}
The two adjustment functions $\pr,\sat\colon 2^Q \to 2^Q$ are then defined for each
$S \subseteq Q$ as follows:
\begin{itemize}
  \item  \emph{pruning}:
      $\pr(S) =S'$ where $S'\subseteq S$ is the lexicographically smallest set
      (given a~fixed ordering on~$Q$)
      such that $\forall q \in S \exists q'\in S' \colon q\sqsubseteq q'$ and
  \item  \emph{saturating}:
    $\sat(S) = \{ p\in Q \mid \exists q \in Q\colon p \fairsimbyc q \}$.
\end{itemize}
%
Informally, $\pr$ removes simulation-smaller states and $\sat$ saturates
a~macrostate with all simulation-smaller states.\footnote{%
It has been brought to our attention by Alexandre Duret-Lutz that a~strategy
similar to \emph{pruning} with direct simulation has been implemented in
\spot's~\cite{spot} determinization and, moreover, generalized
in~\cite{LodingP19} to also work in some cases \emph{within} SCCs.
}
%
%
The correctness of the constructions is summarized by the following theorem:
%
\begin{restatable}{theorem}{theCoBaCorr}\label{the:coba-corr}
	For a co-BA $\autc$, $\langof{\cobasat(\autc)} =
	\langof{\cobapr(\autc)} = \Sigma^\omega \setminus \langof{\autc}$.
\end{restatable}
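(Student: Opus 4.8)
The plan is to show that $\cobapar(\autc)$ recognizes exactly $\Sigma^\omega \setminus \langof{\autc}$ for both choices $\theta \in \{\pr, \sat\}$, by first establishing the correctness of the unadjusted Miyano--Hayashi construction $\compparam$ with $\theta = \mathrm{id}$ (which is classical, but I would recall the argument for completeness), and then showing that replacing $\mathrm{id}$ by $\pr$ or $\sat$ preserves the accepted language. The key observation tying everything together is that both $\pr$ and $\sat$ are \emph{language-preserving} in the following precise sense: for every $S \subseteq Q$, the states of $\theta(S)$ and of $S$ have the same \emph{co-B\"uchi} language in $\autc$, i.e.\ $\langautof{\autc}{\theta(S)} = \langautof{\autc}{S}$. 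For $\sat$ this holds because adding a state $p$ with $p \fairsimbyc q$ for some $q$ already in $S$ cannot add new accepted words (fair simulation implies language containment in the co-B\"uchi sense), and cannot remove any. For $\pr$ this holds because the relation ${\sqsubseteq}$ is defined so that $q \sqsubseteq q'$ implies $q \fairsimbyc q'$, so every state removed is dominated by a retained one; the extra reachability conditions (ii)--(iii) in the definition of $\sqsubseteq$ are there to guarantee that $\pr$ can be applied \emph{consistently along runs} without the fixpoint oscillating, which is what I expect to be the delicate point.

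\textbf{Key steps.} First I would fix notation: for a co-BA, a word $\word$ is accepted from a set $R$ iff some run from some $r \in R$ sees accepting states/transitions only finitely often; equivalently, $\word \notin \langautof{\autc}{R}$ iff \emph{every} run from every $r \in R$ is accepting in the co-B\"uchi-complement sense, i.e.\ hits $\accstates' \cup \acctrans'$ infinitely often. Step~1: recall that $\compparam$ with $\theta = \mathrm{id}$ accepts $\word$ iff the $B$-component is emptied infinitely often, which by the standard cut-point argument happens iff every run of $\autc$ on $\word$ (tracked collectively in the $S$-component, which after a deterministic powerset step equals $\delta(I,\word_{\le n})$) visits $\accstates'$ infinitely often, i.e.\ iff $\word \notin \langof{\autc}$. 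Step~2: prove the language-preservation lemma $\langautof{\autc}{\theta(S)} = \langautof{\autc}{S}$ for $\theta \in \{\pr, \sat\}$, using fair simulation. Step~3: prove by induction on the word prefix that the $S$-component $S_n$ of the run of $\cobapar(\autc)$ on $\word$ satisfies $\langautof{\autc}{S_n} = \langautof{\autc}{\delta(I, \word_{\le n})}$ — i.e.\ applying $\theta$ at every step still keeps track of the full collective co-B\"uchi language — which follows from Step~2 together with the fact that $\delta$ and $\theta$ commute with language in the right way ($\langautof{\autc}{\theta(\delta(S,a))} = \langautof{\autc}{\delta(S,a)} = \langautof{\autc}{\delta(\theta(S),a) \cup (\text{dominated states})}$; here one again invokes that simulation is a congruence for $\delta$). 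Step~4: relate the emptying of the $B$-component of $\cobapar(\autc)$ to $\word \notin \langof{\autc}$, combining Steps~1 and~3: the point is that even though $S_n$ is not literally $\delta(I,\word_{\le n})$, its co-B\"uchi language is the same, and the $B$-component still correctly implements the "all runs see $\accstates'$ infinitely often" check — modulo the subtlety that for $\pr$ a run may be dropped from $S_n$ when it is simulated by a surviving run, but by condition (iii) of $\sqsubseteq$ the surviving run is not simulation-equivalent-and-mutually-reachable with something worse, so no accepting-for-co-B\"uchi behaviour is lost.

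\textbf{Main obstacle.} The hard part is Step~3/Step~4 for the \emph{pruning} case: because $\pr$ discards states, a run of $\autc$ that is "responsible" for a word \emph{being} in $\langof{\autc}$ (a run seeing $\accstates'$ only finitely often) could be pruned away at some step in favour of a $\sqsubseteq$-larger state. I must argue that this is safe — that the surviving state still witnesses the same bad behaviour — which is exactly why $\sqsubseteq$ is \emph{not} just fair simulation but additionally requires (ii) $q$ reachable from $p$ and (iii) $p$ not reachable from $q$ (unless $p = q$). These conditions make $\sqsubseteq$ a partial order that is "compatible with the SCC structure", ensuring that repeatedly applying $\pr$ along an infinite run stabilizes and that the retained state's run genuinely dominates the discarded one's run \emph{in the limit}, not just for one step. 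I expect the cleanest way to handle this is to show that for any infinite run $\rho$ of $\autc$ on $\word$ that is eventually trapped in one SCC, there is a corresponding run $\rho'$ staying in the $S$-components of $\cobapar(\autc)$ with $\rho_n \fairsimbyc \rho'_n$ for all large $n$, so $\rho'$ sees $\accstates'$ finitely often whenever $\rho$ does; combined with the converse inclusion (which is easy since $\pr(S) \subseteq S$), this pins down $\langof{\cobapr(\autc)}$. The saturating case is comparatively routine once the fair-simulation language lemma is in hand, since $\sat(S) \supseteq S$ only adds language-redundant states.
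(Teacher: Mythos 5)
Your plan is correct and follows essentially the same route as the paper: both arguments rest on the fact that fair simulation preserves the existence of co-B\"uchi-accepting traces (so $\sat$ only adds language-redundant states), and on using conditions (ii)--(iii) of $\sqsubseteq$ to show that any accepting run pruned away is eventually fair-simulated by a run that \emph{survives} in all subsequent $S$-components (the paper's Lemma~\ref{lem:pr-trace} builds exactly the finite domination chain you anticipate, with finiteness following from the reachability/antisymmetry conditions). The only difference is packaging: the paper phrases the invariant as ``acc-equivalence'' of trace sets over the fixed word $\alpha$ rather than as language equality of the macrostates, which is the form actually needed for the breakpoint argument and which you correctly identify as the crux in your ``main obstacle'' paragraph.
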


\noindent
In \ranker, we approximate a~fair simulation~$\fairsimbyc$ by a~direct
simulation~$\dirsimbyw$ (which is easier to compute); the correctness holds
due to the following lemma:

\begin{restatable}{lemma}{theFairSim}
  Let $\autw = (Q, \delta, I, \accstates, \emptyset)$ be a~weak BA and
  $\autc = (Q, \delta, I, \accstates' = Q\setminus \accstates, \emptyset)$ be
  a~co-BA.
  Then $\dirsimbyw {\subseteq} \fairsimbyc$.
\end{restatable}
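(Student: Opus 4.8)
The plan is to unfold the definitions of the two simulation relations and show that a direct simulation relation on $\autw$ is, as a set of pairs, also a fair simulation relation on $\autc$. Recall that direct simulation $\dirsimbyw$ is the largest relation $R\subseteq Q\times Q$ such that whenever $p\mathrel{R}q$: (a) if $p\in\accstates$ then $q\in\accstates$, and (b) for every transition $p\ltr{a}p'\in\delta$ there is a transition $q\ltr{a}q'\in\delta$ with $p'\mathrel{R}q'$ (since $\autw$ is weak and state-based, there are no accepting transitions to worry about). Fair simulation $\fairsimbyc$ on $\autc$ is the relation such that $p\fairsimbyc q$ iff in the simulation game Duplicator can, playing from $q$, match every Spoiler move from $p$ so that the resulting play satisfies: if Spoiler's trace is accepting in $\autc$ then Duplicator's trace is accepting in $\autc$. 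For a co-BA, a trace is accepting iff it visits $\accstates'=Q\setminus\accstates$ only finitely often, i.e.\ iff it stays in $\accstates$ from some point on.

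First I would observe that it suffices to show $\dirsimbyw$ is a \emph{valid strategy} for Duplicator in the fair simulation game on $\autc$; since $\fairsimbyc$ is the largest such relation, this gives $\dirsimbyw\subseteq\fairsimbyc$. So suppose $p\dirsimbyw q$. The underlying transition structures of $\autw$ and $\autc$ are identical (same $Q$, same $\delta$), so Duplicator can use the direct-simulation step condition (b) to match Spoiler's moves transition-by-transition, maintaining the invariant that the current pair $(p_i,q_i)$ is always in $\dirsimbyw$. Now I would analyze the resulting pair of infinite traces $\pi_p=p_0p_1\cdots$ and $\pi_q=q_0q_1\cdots$, both respecting the same word. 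By condition (a) of direct simulation, $p_i\in\accstates\implies q_i\in\accstates$ for every $i$, equivalently $q_i\notin\accstates\implies p_i\notin\accstates$, i.e.\ $q_i\in\accstates'\implies p_i\in\accstates'$ for every $i$.

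Next I would conclude the fairness condition. Suppose $\pi_p$ is accepting in $\autc$, i.e.\ $p_i\in\accstates$ for all $i\geq N$ for some $N$. Then $p_i\notin\accstates'$ for all $i\geq N$; by the contrapositive of the implication above, $q_i\notin\accstates'$ for all $i\geq N$, i.e.\ $q_i\in\accstates$ for all $i\geq N$, so $\pi_q$ visits $\accstates'$ only finitely often and is therefore accepting in $\autc$. Hence Duplicator's play is winning, which is exactly what the fair simulation game requires. This shows every pair in $\dirsimbyw$ is in $\fairsimbyc$, completing the proof.

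I do not expect a serious obstacle here; the argument is essentially bookkeeping about which set ($\accstates$ vs.\ $\accstates'$) plays the role of the "accepting" set in each automaton. The one point requiring a little care is that the fair simulation game is genuinely a game (Duplicator may need to look ahead), whereas direct simulation gives a \emph{memoryless, step-local} strategy; the key insight is that this local strategy already suffices to win the fair game, because the direct-simulation invariant forces the acceptance-set membership comparison to hold at \emph{every} position, not just in the limit. I would also note explicitly that $\autw$ being weak (and hence, together with the saturation that produced it, having no accepting transitions) is what lets us ignore the transition-based part of the acceptance condition and work purely with state sets; if one wanted the statement for general BAs one would have to be more careful, but here weakness of $\autw$ is given.
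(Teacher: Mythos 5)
Your proof is correct and follows essentially the same route as the paper's: use the direct-simulation step condition to build a matching trace over the same word, then use the positional acceptance-preservation $p_i\in\accstates\Rightarrow q_i\in\accstates$ to conclude that the matched trace eventually stays in $\accstates$ and is hence accepting in the co-B\"{u}chi sense. The only cosmetic difference is that you phrase fair simulation via the game whereas the paper uses the $\forall\exists$ trace formulation it stated in the body; since the game-based relation refines the trace-based one, your argument subsumes theirs.
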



\vspace{-0.0mm}
\subsection{\ncsbmaxrank Construction}
\label{sec:ncsb-maxrank}
\vspace{-0.0mm}


The structure of semi-deterministic BAs allows to use more efficient
complementation techniques.
From the point of view of rank-based complementation,
the maximum rank of semi-deterministic automata can be bounded by~3.
If a~rank-based complementation procedure based on \emph{tight rankings} (such
as~\cite{HavlenaL2021,HavlenaLS22}) is used to complement an SDBA, it can suffer
from having too many states due to the presence of the \emph{waiting} part
(intuitively, runs wait in the waiting part of the complement until they can see
only tight rankings, then they jump to the \emph{tight} part where they can
accept, cf.~\cite{FriedgutKV06,Schewe09,HavlenaL2021} for more details).
Furthermore, the information about ranks of individual runs may sometimes be
more precise than necessary, which disables merging some runs.
The NCSB construction~\cite{BlahoudekHSST16} overcomes these issues by not
considering the waiting part and keeping only rough information about the ranks.
As a~matter of fact, NCSB and the rank-based approach are not
comparable due to tight-rankings and additional techniques restricting the
ranking functions~\cite{HavlenaL2021,HavlenaLS22}, taking into account
structural properties of the automaton, which is why \ranker in the default
setting tries both rank-based and NCSB-based procedures for complementing SDBAs.


An issue of the NCSB algorithm is a~high degree of nondeterminism of the
constructed BA (and therefore also a~higher number of states).
The \ncsblazy construction~\cite{ChenHLLTTZ18} improves the original
algorithm with postponing the nondeterministic choices, which usually produces
smaller results.
Even the \ncsblazy construction may, however, suffer in some cases from
generating too many successors.
We propose an
improvement of the original NCSB algorithm, inspired by the \algmaxrank
construction in rank-based complementation from~\cite{HavlenaL2021} (which is
inspired by~\cite[Section~4]{Schewe09}), hence called the \ncsbmaxrank
construction, reducing the number of successors of any macrostate and symbol to
at most two.


Formally, for a given SDBA $\aut = (\stnondet \uplus \stdet, \delta = \delnondet
\uplus \deldet \uplus \deltrans, I, \accstates, \acctrans)$ where~$\stdet$ are
the states reachable from an accepting state or transition and
$\stnondet$ is the rest, $\delnondet =\delta_{|\stnondet} $, $\deldet = \delta_{|\stdet}$,
and $\deltrans$ is the transition function between $\stnondet$ and~$\stdet$, we
define
$\ncsbmaxrank(\aut) = (Q',I',\delta',\accstates', \emptyset)$ to be the
(state-based) BA whose components are the following:
\begin{itemize}
	\item $Q' = \{ (N,C,S,B) \in 2^{\stnondet}\times 2^{\stdet}\times 2^{\stdet\setminus \accstates}\times 2^{\stdet} \mid B \subseteq C \}$,
	\item $I' = \{ (\stnondet\cap I, \stdet\cap I, \emptyset, \stdet \cap I) \}$,
  \item $\delta' = \gamma_1 \cup \gamma_2$ where the successors of a~macrostate
    $(N,C,S,B)$ over $a \in \Sigma$ are defined such that if $\acctrans(S,a)
    \neq \emptyset$ then $\delta'((N, C, S, B), a) = \emptyset$, else
    \begin{itemize}
      \item $\gamma_1((N, C, S, B), a) = \{(N', C', S', B')\}$ where
        \begin{itemize}
          \item $N' = \delnondet(N,a)$,
          \item $S' = \deldet(S, a)$,
          \item $C' = (\deltrans(N,a) \cup \deldet(C, a))\setminus S'$, and
          \item $B' = C'$ if $B = \emptyset$, otherwise $B' = \deldet(B, a) \cap C'$.
        \end{itemize}
      \item If $B' \cap \accstates = \emptyset$, we also set $\gamma_2((N, C, S, B), a) =
        \{(N', C^\bullet, S^\bullet, B^\bullet)\}$ with
        \begin{itemize}
          \item $B^\bullet = \emptyset$,
          \item $S^\bullet = S' \cup B'$, and
          \item $C^\bullet = C' \setminus S^\bullet$,
        \end{itemize}
        else $\gamma_2((N, C, S, B), a) = \emptyset$.
    \end{itemize}
	\item $\accstates' = \{(N,C,S,B) \in Q' \mid B = \emptyset\}$.
\end{itemize}
Intuitively,
\ncsbmaxrank provides at most two choices for each macrostate:
either keep all states in $B$ or move all states from $B$ to $S$ (if $B$
contains no accepting state). If a word is not accepted by $\aut$, it
will be safe to put all states from $B$ to $S$ at some point.
The construction is in fact incomparable to the original NCSB
algorithm~\cite{BlahoudekHSST16} (in particular due to the condition $C'
\subseteq \deldet(C\setminus \accstates, a)$, which need not hold in
\ncsbmaxrank).
Correctness of the construction is given by the following theorem.

\begin{restatable}{theorem}{theMaxrankCorr}
	Let $\aut$ be an SDBA.
  Then $\langof{\ncsbmaxrank(\aut)} = \Sigma^\omega \setminus \langof{\aut}$.
\end{restatable}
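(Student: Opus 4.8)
The plan is to establish the two language inclusions separately, in each case by \emph{tracking} the runs of $\aut$ through the components of the macrostates of $\ncsbmaxrank(\aut)$, adapting the correctness arguments for the NCSB and \ncsblazy constructions~\cite{BlahoudekHSST16,ChenHLLTTZ18} to the two features that distinguish ours: the $C$-component may now contain accepting states and their deterministic descendants (the condition $C'\subseteq\deldet(C\setminus\accstates,a)$ of the original construction is dropped), and the nondeterminism is restricted to the binary choice between $\gamma_1$ and $\gamma_2$. Throughout I use the standing facts that $\aut[q]$ is deterministic for every $q\in\stdet$, that $\stdet$ is closed under $\delta$, and that every accepting state and every target of an accepting transition lies in $\stdet$.

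For the inclusion $\langof{\ncsbmaxrank(\aut)}\subseteq\Sigma^\omega\setminus\langof\aut$, fix a word $w$, an accepting run $\rho'=(N_0,C_0,S_0,B_0)(N_1,C_1,S_1,B_1)\cdots$ of $\ncsbmaxrank(\aut)$ on $w$ --- since $\rho'$ is infinite it never visits a successorless macrostate, so the guard ``$\acctrans(S_i,\wordof{i})\neq\emptyset$'' never fires and every $(N_i,C_i,S_i,B_i)$ is a legal macrostate, whence $S_i\subseteq\stdet\setminus\accstates$ --- and an arbitrary run $\pi=q_0q_1\cdots$ of $\aut$ on $w$; I must show $\pi$ is rejecting. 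A straightforward induction on $i$ gives the tracking invariant $q_i\in N_i\cup C_i\cup S_i$ together with the monotonicity facts that from the $N$-component $\pi$ moves into $N_{i+1}$ or into $C_{i+1}\cup S_{i+1}$, from the $C$-component into $C_{i+1}\cup S_{i+1}$, and from the $S$-component only into $S_{i+1}$, and moreover that while $\pi$ is in the $S$-component it takes no accepting transition (the guard) and visits no accepting state ($S_i\subseteq\stdet\setminus\accstates$). Thus $\pi$ passes monotonically from $N$ through $C$ to $S$. If $\pi$ stays in the $N$-component forever it is rejecting, since it then visits no accepting state and takes no accepting transition (both live only in $\stdet$); if it ever enters the $S$-component it is rejecting by the previous sentence. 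The remaining case, $q_i\in C_i$ for all large $i$, is excluded using that $\rho'$ is accepting, i.e.\ $B_i=\emptyset$ for infinitely many $i$: the key lemma is that whenever $B$ becomes empty at a time $t$, every state with which the breakpoint was (re)initialised at the start of the current breakpoint round has had its deterministic continuation enter the $S$-component by time $t$ (it leaves the $C$-component either by a $\gamma_2$-dump or, under $\gamma_1$, by coinciding with the deterministic image of some $S$-state, and then stays in $S$ forever); together with the definition of $\gamma_2$ this forces $\pi$, once it is permanently in the $C$-component, into the $S$-component, contradicting $C_i\cap S_i=\emptyset$. Hence every run of $\aut$ on $w$ is rejecting, i.e.\ $w\notin\langof\aut$.

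For the converse, $\Sigma^\omega\setminus\langof\aut\subseteq\langof{\ncsbmaxrank(\aut)}$, fix $w\notin\langof\aut$. By semi-determinism, any run of $\aut$ on $w$ that enters $\stdet$ continues deterministically from its entry point, and, $w$ being rejected, that continuation sees only finitely many accepting states and accepting transitions. I construct an accepting run of $\ncsbmaxrank(\aut)$ on $w$ greedily: start at the initial macrostate and at each step take $\gamma_1$, except that when the $\gamma_1$-computed breakpoint $B'$ is disjoint from $\accstates$ and \emph{safe} --- i.e.\ every state of $B'$ lies on a deterministic continuation of $w$ that from that point on visits no accepting state and takes no accepting transition --- take $\gamma_2$ instead. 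Two things must be checked. First, the run never reaches a successorless macrostate and preserves the invariant that every state in the $S$-component lies on an accepting-event-free continuation: a state enters the $S$-component only through a safe $\gamma_2$-step, and the $S$-component evolves by $\deldet$, so this invariant is maintained, the guard never fires, and the $S$-component stays within $\stdet\setminus\accstates$, so all chosen successors are legal macrostates. Second, $B$ is empty infinitely often: right after a reset $B$ equals the current (finite) $C\subseteq\stdet$, and, as long as it stays non-empty, under $\gamma_1$ it only loses states; since each of the finitely many continuations passing through it is rejecting, after finitely many steps they have all passed their last accepting event, at which point the $\gamma_1$-successor breakpoint is disjoint from $\accstates$ and safe, so the strategy takes $\gamma_2$ and empties $B$. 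Infinitely many resets therefore occur, and the run is accepting.

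The main obstacle is the breakpoint lemma of the first inclusion --- establishing, against the weakened $C$-component, that ``$B$ empties'' still certifies that every run reaching $\stdet$ is absorbed into the safe component; this needs a careful case analysis of $B'=\deldet(B,a)\cap C'$ in tandem with $S'=\deldet(S,a)$, including the degenerate rounds in which $\gamma_2$ is taken immediately after an emptying --- together with its mirror in the second inclusion, namely that a safe dump time recurs after every reset, where one must additionally account for the relaxed $C$-component temporarily carrying accepting states (so that $\gamma_2$ becomes available only once they are passed). The rest --- the inductive tracking invariant, the closure and locality properties of $\stdet$, and the elementary bookkeeping for the four components --- is routine.
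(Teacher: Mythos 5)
Your proposal is correct and follows essentially the same route as the paper's proof: the first inclusion is the contrapositive of the paper's run-tracking argument (every run of $\aut$ passes monotonically $N\to C\to S$, dies or avoids accepting events in $S$, and cannot linger in $C$ forever because each emptying of the breakpoint certifies absorption into $S$), and the second inclusion is the same greedy construction, with your ``safe dump time'' being exactly the paper's Lemma on deterministic continuations of rejected words eventually seeing no accepting states or transitions. If anything, your case analysis of how a state leaves $B$ under $\gamma_1$ (by coinciding with $\deldet(S,a)$) is spelled out more explicitly than in the paper's rather terse treatment.
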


\vspace{-2mm}
\section{Experimental Evaluation}
\vspace{-0mm}
%
%
%
%

\newcommand{\figoldspot}[0]{
\begin{figure}[t]
  \vspace*{-2mm}
  \begin{subfigure}[b]{0.49\linewidth}
  \begin{center}
  \includegraphics[width=\linewidth,keepaspectratio]{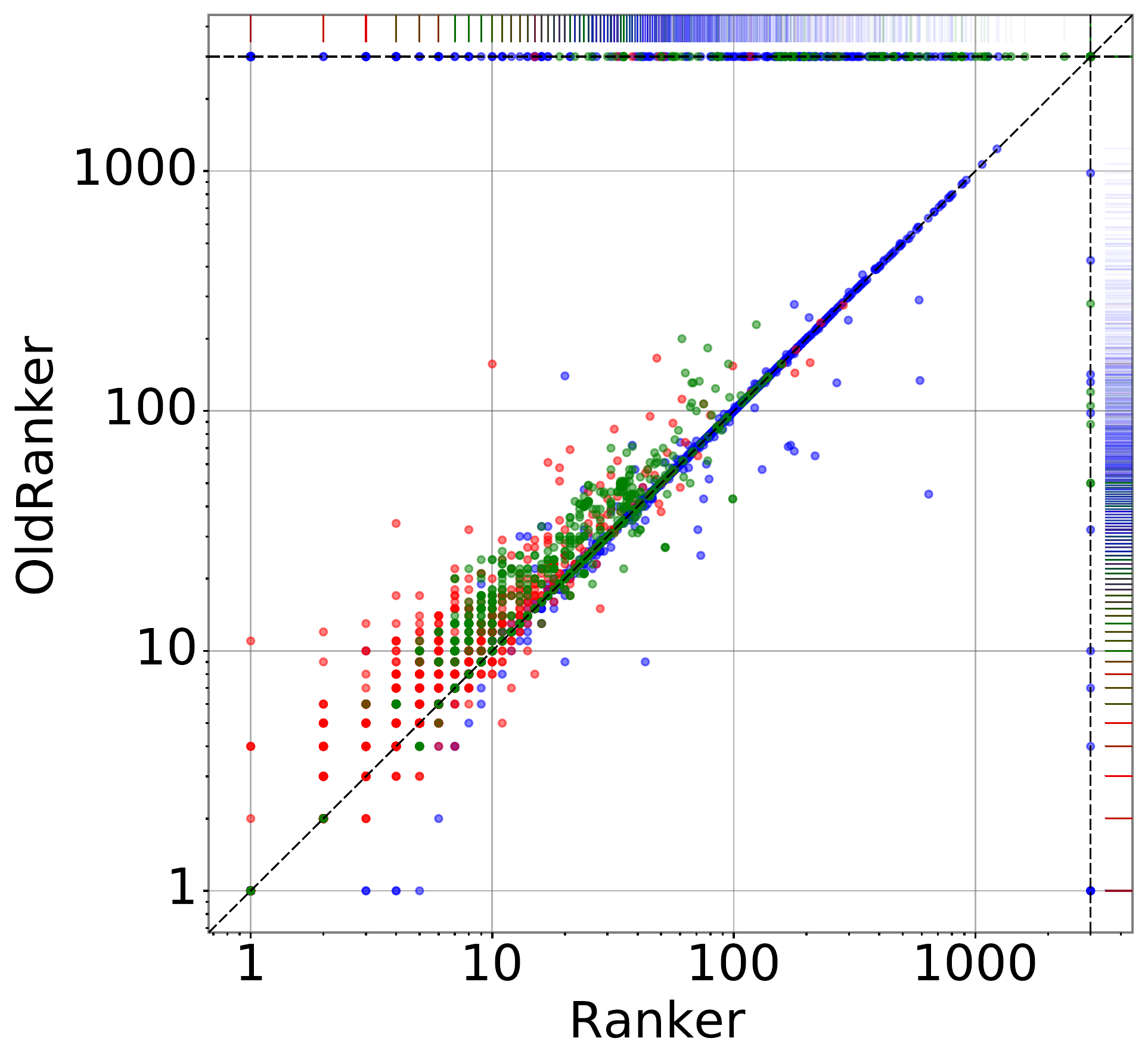}
  \end{center}
  \vspace{-3mm}
  \caption{$\ranker$ vs $\rankerold$}
  \label{fig:rankerspot}
  \end{subfigure}
  \begin{subfigure}[b]{0.49\linewidth}
  \begin{center}
  \includegraphics[width=\linewidth,keepaspectratio]{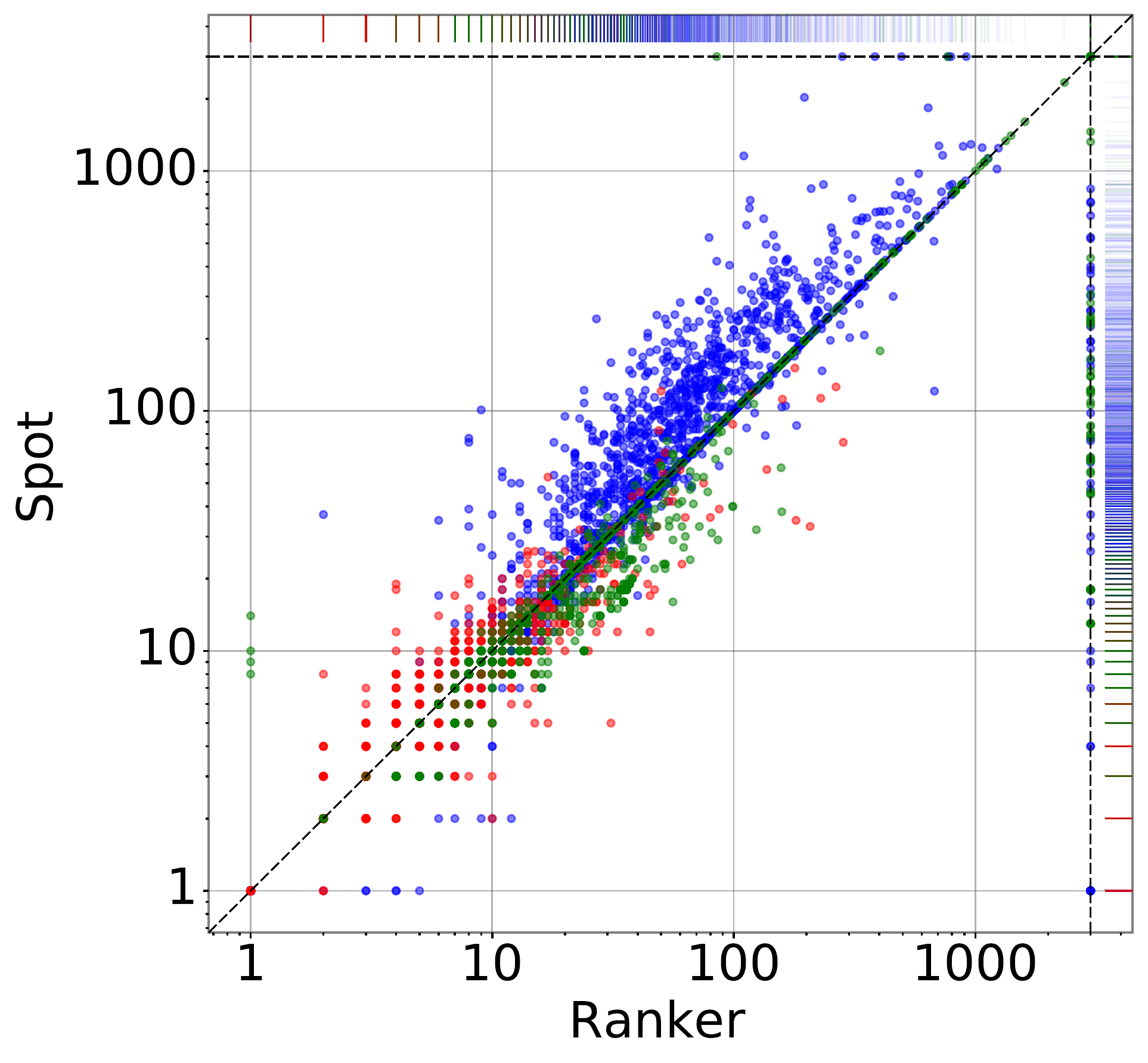}
  \end{center}
  \vspace{-3mm}
  \caption{$\ranker$ vs $\spot$}
  \label{fig:rankerroll}
  \end{subfigure}
  \vspace{0mm}
\caption{Comparison of the complement size obtained by \ranker,
  $\rankerold$, and \spot (horizontal and vertical dashed lines represent
  timeouts).}
\label{fig:comp-others}
\vspace*{-3mm}
\end{figure}
}

\newcommand{\figiwsdba}[0]{
\begin{figure}[t]
  \vspace*{-2mm}
  \begin{subfigure}[b]{0.49\linewidth}
  \begin{center}
  \includegraphics[width=\linewidth,keepaspectratio]{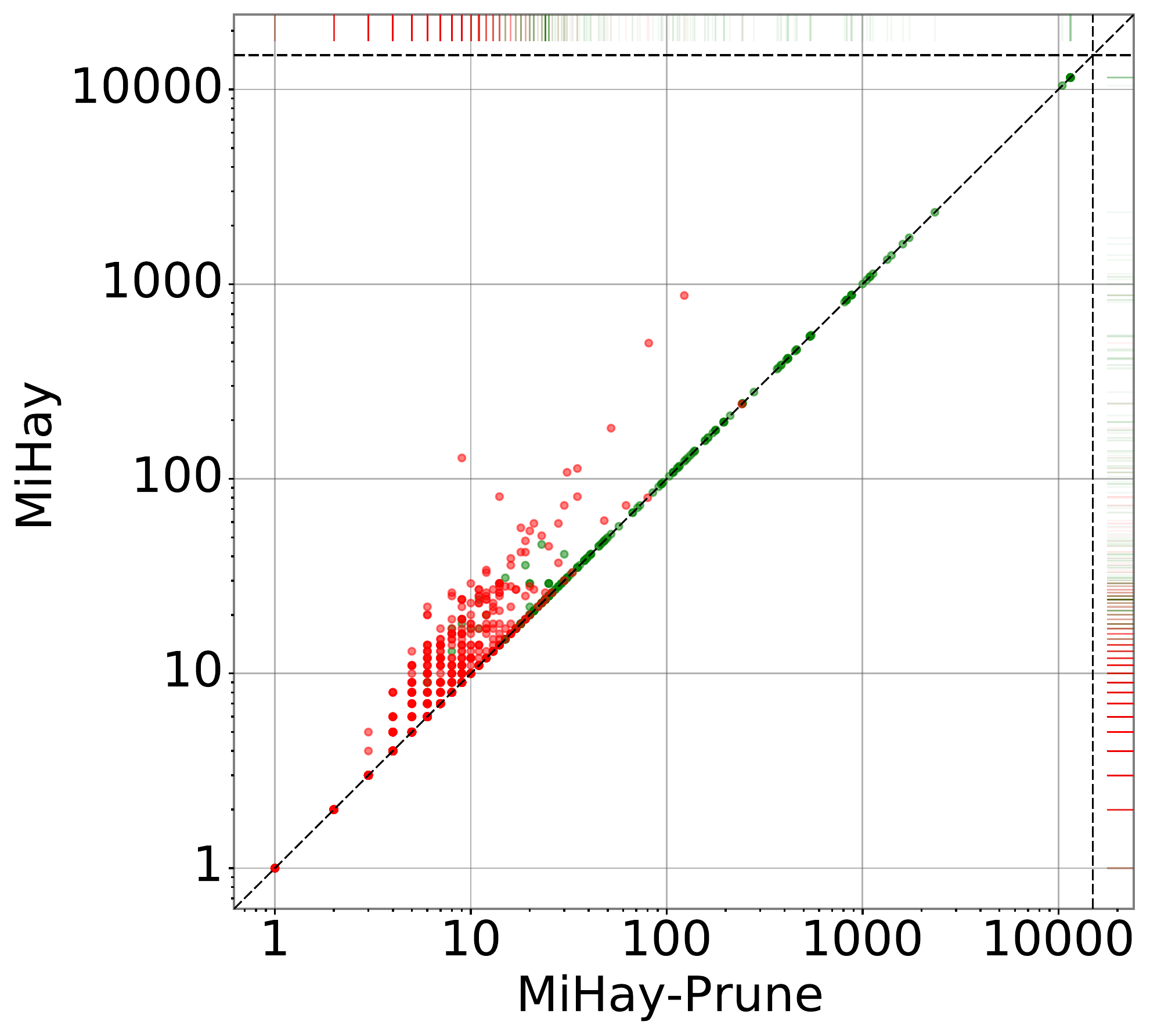}
  \end{center}
  \vspace{-3mm}
  \caption{$\cobapr$ vs $\mihay$}
  \label{fig:res-mihay}
  \end{subfigure}
  \begin{subfigure}[b]{0.49\linewidth}
  \begin{center}
  \includegraphics[width=\linewidth,keepaspectratio]{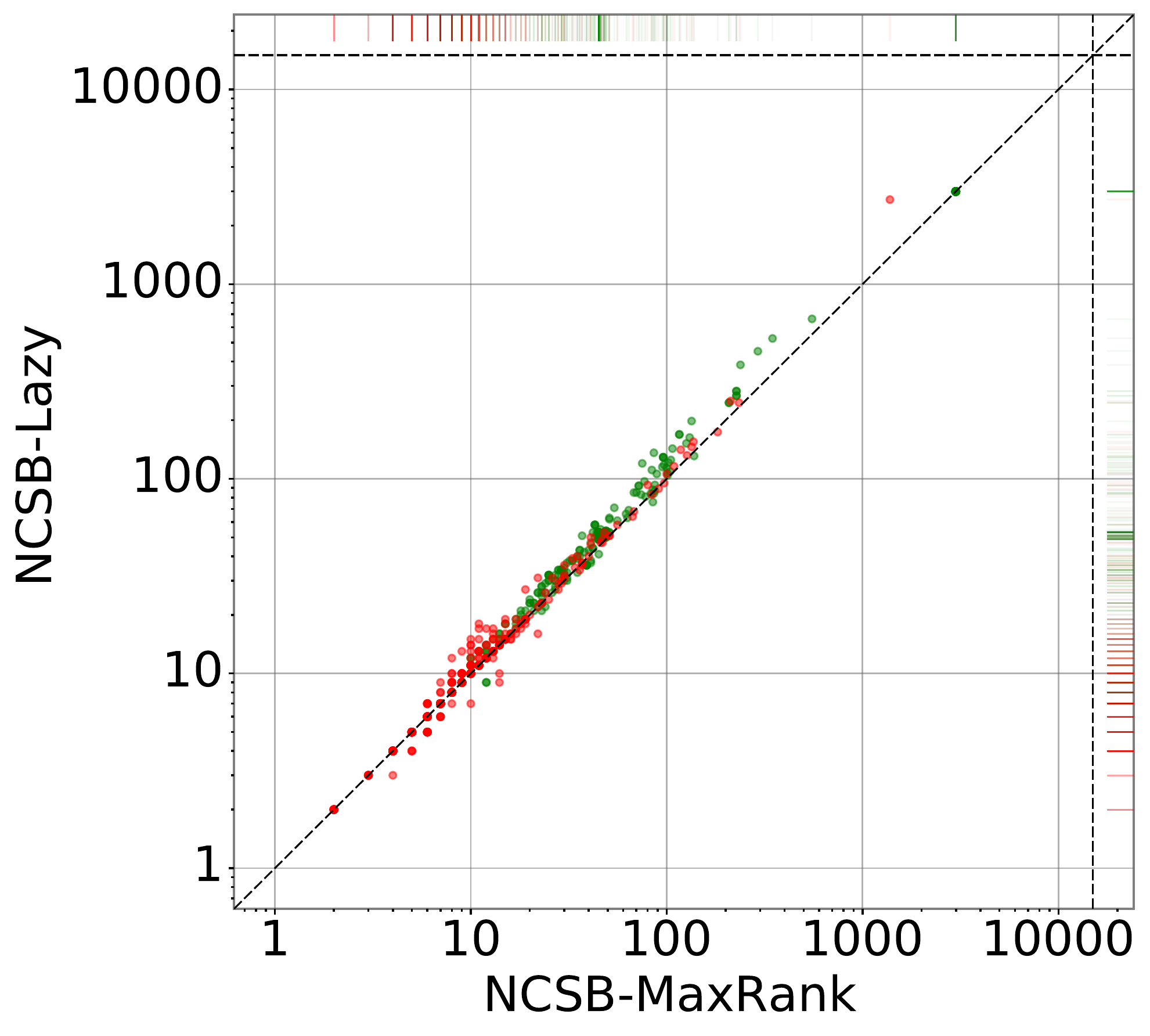}
  \end{center}
  \vspace{-3mm}
  \caption{$\ncsbmaxrank$ vs $\ncsblazy$}
  \label{fig:res-ncsb}
  \end{subfigure}
  \vspace{0mm}
\caption{Evaluation of the effect of our optimizations for IW and SDBA automata.
  }
\label{fig:comp-others}
\end{figure}
}

\newcommand{
  \begin{table}[t]
  \caption{Statistics for our experiments.
    The table compares the sizes of complement BAs obtained by \ranker and other
    approaches (after postprocessing).
    The \textbf{wins} and \textbf{losses} columns give the number of times when \ranker
    was strictly better and worse.
    The values are given for the three datasets as ``\dsall (\dsrandom{} :
    \dsltl{} : \dsautomizer)''.
    Approaches in \goal are labelled with~\goalmark.
    }
    \vspace{3mm}
  \label{tab:results}
  \hspace{-3mm}
  \resizebox{1.035\linewidth}{!}{%
    \newcolumntype{g}{>{\columncolor{Gray!30}}r}
\newcolumntype{f}{>{\columncolor{Gray!30}}l}
\newcolumntype{h}{>{\columncolor{Gray!30}}c}
\begin{tabular}{lgggggfrrrrrlgggggfrrrrrlgggggf}
\toprule
\multicolumn{1}{c}{\textbf{method}} & \multicolumn{6}{h}{\textbf{mean}}   & \multicolumn{6}{c}{\textbf{median}}   &   \multicolumn{6}{h}{\textbf{wins}} & \multicolumn{6}{c}{\textbf{losses}}   & \multicolumn{6}{h}{\textbf{timeouts}}   \\
\midrule
\rowcolor{GreenYellow}\ranker & 38 & (44 & \!:\! &  9 & \!:\! & 67) & 11 & (18 & \!:\! &  5 & \!:\! & 22) &      &       &       &      &       &      &      &      &       &     &       &      &                                   158 & (53                                 & \!:\!                                 &                                   0 & \!:\!                                 & 105)                                    \\
 $\rankerold$                  & 30 & (38 & \!:\! & 10 & \!:\! & 32) & 12 & (18 & \!:\! &  6 & \!:\! & 22) & 1554 & (356  & \!:\! & 650  & \!:\! & 548) & 264  & (142 & \!:\! & 69  & \!:\! & 53)  &                                   458 & (259                                & \!:\!                                 &                                   7 & \!:\!                                 & 192)                                    \\
 \piterman~\goalmark           & 43 & (56 & \!:\! & 12 & \!:\! & 38) & 14 & (19 & \!:\! &  8 & \!:\! & 24) & 2881 & (1279 & \!:\! & 966  & \!:\! & 636) & 392  & (263 & \!:\! & 68  & \!:\! & 61)  &                                   309 & (12                                 & \!:\!                                 &                                   4 & \!:\!                                 & 293)                                    \\
 \safra~\goalmark              & 49 & (60 & \!:\! & 17 & \!:\! & 56) & 15 & (18 & \!:\! & 10 & \!:\! & 24) & 3109 & (1348 & \!:\! & 1117 & \!:\! & 644) & 274  & (229 & \!:\! & 31  & \!:\! & 14)  &                                   599 & (160                                & \!:\!                                 &                                  30 & \!:\!                                 & 409)                                    \\
 \spot                         & 46 & (57 & \!:\! &  8 & \!:\! & 66) & 11 & (18 & \!:\! &  5 & \!:\! & 18) & 1347 & (935  & \!:\! & 339  & \!:\! & 73)  & 1057 & (327 & \!:\! & 343 & \!:\! & 387) &                                    73 & (13                                 & \!:\!                                 &                                   0 & \!:\!                                 & 60)                                     \\
 \fribourg~\goalmark           & 49 & (68 & \!:\! &  8 & \!:\! & 27) & 11 & (18 & \!:\! &  6 & \!:\! & 19) & 2223 & (1177 & \!:\! & 503  & \!:\! & 543) & 586  & (245 & \!:\! & 207 & \!:\! & 134) &                                   399 & (93                                 & \!:\!                                 &                                   2 & \!:\!                                 & 304)                                    \\
 \ltldstar                     & 44 & (56 & \!:\! & 12 & \!:\! & 47) & 14 & (19 & \!:\! &  7 & \!:\! & 24) & 2794 & (1297 & \!:\! & 924  & \!:\! & 573) & 448  & (283 & \!:\! & 88  & \!:\! & 77)  &                                   288 & (130                                & \!:\!                                 &                                  13 & \!:\!                                 & 145)                                    \\
 \seminator                    & 46 & (58 & \!:\! &  8 & \!:\! & 64) & 11 & (17 & \!:\! &  5 & \!:\! & 21) & 1626 & (1297 & \!:\! & 291  & \!:\! & 38)  & 1113 & (286 & \!:\! & 398 & \!:\! & 429) &                                   419 & (368                                & \!:\!                                 &                                   1 & \!:\!                                 & 50)                                     \\
 \roll                         & 18 & (15 & \!:\! & 11 & \!:\! & 54) &  9 & (8  & \!:\! &  8 & \!:\! & 28) & 6050 & (3824 & \!:\! & 1551 & \!:\! & 675) & 620  & (369 & \!:\! & 125 & \!:\! & 126) &                                  1893 & (1595                               & \!:\!                                 &                                   8 & \!:\!                                 & 290)                                    \\
\bottomrule
\end{tabular}

  }
  \vspace*{-3mm}
  \end{table}
}[0]{
  \begin{table}[t]
  \caption{Statistics for our experiments.
    The table compares the sizes of complement BAs obtained by \ranker and other
    approaches (after postprocessing).
    The \textbf{wins} and \textbf{losses} columns give the number of times when \ranker
    was strictly better and worse.
    The values are given for the three datasets as ``\dsall (\dsrandom{} :
    \dsltl{} : \dsautomizer)''.
    Approaches in \goal are labelled with~\goalmark.
    }
    \vspace{3mm}
  \label{tab:results}
  \hspace{-3mm}
  \resizebox{1.035\linewidth}{!}{%
    \newcolumntype{g}{>{\columncolor{Gray!30}}r}
\newcolumntype{f}{>{\columncolor{Gray!30}}l}
\newcolumntype{h}{>{\columncolor{Gray!30}}c}
\begin{tabular}{lgggggfrrrrrlgggggfrrrrrlgggggf}
\toprule
\multicolumn{1}{c}{\textbf{method}} & \multicolumn{6}{h}{\textbf{mean}}   & \multicolumn{6}{c}{\textbf{median}}   &   \multicolumn{6}{h}{\textbf{wins}} & \multicolumn{6}{c}{\textbf{losses}}   & \multicolumn{6}{h}{\textbf{timeouts}}   \\
\midrule
\rowcolor{GreenYellow}\ranker & 38 & (44 & \!:\! &  9 & \!:\! & 67) & 11 & (18 & \!:\! &  5 & \!:\! & 22) &      &       &       &      &       &      &      &      &       &     &       &      &                                   158 & (53                                 & \!:\!                                 &                                   0 & \!:\!                                 & 105)                                    \\
 $\rankerold$                  & 30 & (38 & \!:\! & 10 & \!:\! & 32) & 12 & (18 & \!:\! &  6 & \!:\! & 22) & 1554 & (356  & \!:\! & 650  & \!:\! & 548) & 264  & (142 & \!:\! & 69  & \!:\! & 53)  &                                   458 & (259                                & \!:\!                                 &                                   7 & \!:\!                                 & 192)                                    \\
 \piterman~\goalmark           & 43 & (56 & \!:\! & 12 & \!:\! & 38) & 14 & (19 & \!:\! &  8 & \!:\! & 24) & 2881 & (1279 & \!:\! & 966  & \!:\! & 636) & 392  & (263 & \!:\! & 68  & \!:\! & 61)  &                                   309 & (12                                 & \!:\!                                 &                                   4 & \!:\!                                 & 293)                                    \\
 \safra~\goalmark              & 49 & (60 & \!:\! & 17 & \!:\! & 56) & 15 & (18 & \!:\! & 10 & \!:\! & 24) & 3109 & (1348 & \!:\! & 1117 & \!:\! & 644) & 274  & (229 & \!:\! & 31  & \!:\! & 14)  &                                   599 & (160                                & \!:\!                                 &                                  30 & \!:\!                                 & 409)                                    \\
 \spot                         & 46 & (57 & \!:\! &  8 & \!:\! & 66) & 11 & (18 & \!:\! &  5 & \!:\! & 18) & 1347 & (935  & \!:\! & 339  & \!:\! & 73)  & 1057 & (327 & \!:\! & 343 & \!:\! & 387) &                                    73 & (13                                 & \!:\!                                 &                                   0 & \!:\!                                 & 60)                                     \\
 \fribourg~\goalmark           & 49 & (68 & \!:\! &  8 & \!:\! & 27) & 11 & (18 & \!:\! &  6 & \!:\! & 19) & 2223 & (1177 & \!:\! & 503  & \!:\! & 543) & 586  & (245 & \!:\! & 207 & \!:\! & 134) &                                   399 & (93                                 & \!:\!                                 &                                   2 & \!:\!                                 & 304)                                    \\
 \ltldstar                     & 44 & (56 & \!:\! & 12 & \!:\! & 47) & 14 & (19 & \!:\! &  7 & \!:\! & 24) & 2794 & (1297 & \!:\! & 924  & \!:\! & 573) & 448  & (283 & \!:\! & 88  & \!:\! & 77)  &                                   288 & (130                                & \!:\!                                 &                                  13 & \!:\!                                 & 145)                                    \\
 \seminator                    & 46 & (58 & \!:\! &  8 & \!:\! & 64) & 11 & (17 & \!:\! &  5 & \!:\! & 21) & 1626 & (1297 & \!:\! & 291  & \!:\! & 38)  & 1113 & (286 & \!:\! & 398 & \!:\! & 429) &                                   419 & (368                                & \!:\!                                 &                                   1 & \!:\!                                 & 50)                                     \\
 \roll                         & 18 & (15 & \!:\! & 11 & \!:\! & 54) &  9 & (8  & \!:\! &  8 & \!:\! & 28) & 6050 & (3824 & \!:\! & 1551 & \!:\! & 675) & 620  & (369 & \!:\! & 125 & \!:\! & 126) &                                  1893 & (1595                               & \!:\!                                 &                                   8 & \!:\!                                 & 290)                                    \\
\bottomrule
\end{tabular}

  }
  \vspace*{-3mm}
  \end{table}
}

\newcommand{
\begin{wraptable}[11]{r}{6.3cm}
\vspace{-6mm}
\caption{Run times of the tools [s]
given as ``\dsall (\dsrandom{} : \dsltl{} : \dsautomizer)''}
\label{tab:times}
\vspace{-1mm}
\hspace*{-3mm}
\resizebox{6.6cm}{!}{
\newcolumntype{g}{>{\columncolor{Gray!30}}r}
\newcolumntype{f}{>{\columncolor{Gray!30}}l}
\newcolumntype{h}{>{\columncolor{Gray!30}}c}
\begin{tabular}{lgggggfrrrrrl}
\toprule
\multicolumn{1}{c}{\textbf{method}} & \multicolumn{6}{h}{\textbf{mean}}   & \multicolumn{6}{c}{\textbf{median}}   \\
\midrule
\rowcolor{GreenYellow}\ranker &  3.72 & (4.34  & \!:\! & 0.45 & \!:\! & 7.30)  & 0.05 & (0.10 & \!:\! &                                  0.04 & \!:\!                               & 0.08)                                 \\
 $\rankerold$                  &  4.62 & (5.33  & \!:\! & 0.72 & \!:\! & 9.69)  & 0.07 & (0.19 & \!:\! &                                  0.03 & \!:\!                               & 0.15)                                 \\
 \piterman~\goalmark           &  8.06 & (6.07  & \!:\! & 5.95 & \!:\! & 28.38) & 5.12 & (4.96 & \!:\! &                                  5.08 & \!:\!                               & 8.68)                                 \\
 \safra~\goalmark              & 11.58 & (10.41 & \!:\! & 6.51 & \!:\! & 38.65) & 5.41 & (5.32 & \!:\! &                                  5.26 & \!:\!                               & 9.02)                                 \\
 \spot                         &  0.64 & (0.57  & \!:\! & 0.02 & \!:\! & 2.28)  & 0.02 & (0.02 & \!:\! &                                  0.01 & \!:\!                               & 0.02)                                 \\
 \fribourg~\goalmark           & 13.13 & (14.14 & \!:\! & 6.06 & \!:\! & 23.88) & 5.69 & (6.82 & \!:\! &                                  4.92 & \!:\!                               & 6.57)                                 \\
 \ltldstar                     &  2.1  & (2.25  & \!:\! & 0.34 & \!:\! & 5.15)  & 0.02 & (0.02 & \!:\! &                                  0.01 & \!:\!                               & 0.05)                                 \\
 \seminator                    &  4.16 & (6.33  & \!:\! & 0.03 & \!:\! & 1.88)  & 0.03 & (0.08 & \!:\! &                                  0.01 & \!:\!                               & 0.03)                                 \\
 \roll                         & 23.65 & (29.82 & \!:\! & 3.88 & \!:\! & 49.02) & 3.34 & (6.19 & \!:\! &                                  1.71 & \!:\!                               & 17.14)                                \\
\bottomrule
\end{tabular}

}
\end{wraptable}
}[0]{
\begin{wraptable}[11]{r}{6.3cm}
\vspace{-6mm}
\caption{Run times of the tools [s]
given as ``\dsall (\dsrandom{} : \dsltl{} : \dsautomizer)''}
\label{tab:times}
\vspace{-1mm}
\hspace*{-3mm}
\resizebox{6.6cm}{!}{
\newcolumntype{g}{>{\columncolor{Gray!30}}r}
\newcolumntype{f}{>{\columncolor{Gray!30}}l}
\newcolumntype{h}{>{\columncolor{Gray!30}}c}
\begin{tabular}{lgggggfrrrrrl}
\toprule
\multicolumn{1}{c}{\textbf{method}} & \multicolumn{6}{h}{\textbf{mean}}   & \multicolumn{6}{c}{\textbf{median}}   \\
\midrule
\rowcolor{GreenYellow}\ranker &  3.72 & (4.34  & \!:\! & 0.45 & \!:\! & 7.30)  & 0.05 & (0.10 & \!:\! &                                  0.04 & \!:\!                               & 0.08)                                 \\
 $\rankerold$                  &  4.62 & (5.33  & \!:\! & 0.72 & \!:\! & 9.69)  & 0.07 & (0.19 & \!:\! &                                  0.03 & \!:\!                               & 0.15)                                 \\
 \piterman~\goalmark           &  8.06 & (6.07  & \!:\! & 5.95 & \!:\! & 28.38) & 5.12 & (4.96 & \!:\! &                                  5.08 & \!:\!                               & 8.68)                                 \\
 \safra~\goalmark              & 11.58 & (10.41 & \!:\! & 6.51 & \!:\! & 38.65) & 5.41 & (5.32 & \!:\! &                                  5.26 & \!:\!                               & 9.02)                                 \\
 \spot                         &  0.64 & (0.57  & \!:\! & 0.02 & \!:\! & 2.28)  & 0.02 & (0.02 & \!:\! &                                  0.01 & \!:\!                               & 0.02)                                 \\
 \fribourg~\goalmark           & 13.13 & (14.14 & \!:\! & 6.06 & \!:\! & 23.88) & 5.69 & (6.82 & \!:\! &                                  4.92 & \!:\!                               & 6.57)                                 \\
 \ltldstar                     &  2.1  & (2.25  & \!:\! & 0.34 & \!:\! & 5.15)  & 0.02 & (0.02 & \!:\! &                                  0.01 & \!:\!                               & 0.05)                                 \\
 \seminator                    &  4.16 & (6.33  & \!:\! & 0.03 & \!:\! & 1.88)  & 0.03 & (0.08 & \!:\! &                                  0.01 & \!:\!                               & 0.03)                                 \\
 \roll                         & 23.65 & (29.82 & \!:\! & 3.88 & \!:\! & 49.02) & 3.34 & (6.19 & \!:\! &                                  1.71 & \!:\!                               & 17.14)                                \\
\bottomrule
\end{tabular}

}
\end{wraptable}
}

\figiwsdba

We compared the improved version of \ranker presented in this paper with other state-of-the-art tools, namely,
\goal~\cite{goal} (implementing \piterman~\cite{piterman2006nondeterministic}, \safra~\cite{safra1988complexity},
and \fribourg~\cite{fribourg}),
\spot~2.9.3~\cite{spot} (implementing Redziejowski's
algorithm~\cite{Redziejowski12}), \seminator~\cite{seminator},
\ltldstar~0.5.4~\cite{KleinB07}, \roll~\cite{roll}, and the previous version of
\ranker from~\cite{HavlenaLS22}, denoted as $\rankerold$.
All tools were set to the mode where they output a~state-based BA.
The correctness of our implementation was tested using \spot's
\texttt{autcross} on all of BAs from our benchmarks.
The experimental evaluation was performed on a~64-bit \textsc{GNU/Linux Debian}
workstation with an Intel(R) Xeon(R) CPU E5-2620 running at 2.40\,GHz with
32\,GiB of RAM, using a~5-minute timeout.
Axes in plots are logarithmic.
An~artifact that allows reproduction of the results is available as~\cite{Ranker22artifact}.

\paragraph{Datasets.}
We use automata from the following three datasets:
\begin{inparaenum}[(i)]
  \item  \dsrandom containing 11,000 BAs over a~two letter alphabet used in~\cite{tsai-compl}, which
    were randomly generated via the Tabakov-Vardi approach~\cite{TabakovV05},
    starting from 15 states and with various parameter settings;
  \item  \dsltl with 1,721 BAs over larger alphabets (up to 128 symbols) used in~\cite{seminator},
    obtained from LTL formulae from literature (221) or randomly generated (1,500),
  \item \dsautomizer containing 906 BAs over larger alphabets (up to $2^{35}$ symbols) used in~\cite{ChenHLLTTZ18}, which were obtained from the \automizer tool
(all benchmarks are available at~\cite{Lengal22AB}).
\end{inparaenum}
Note that we included \dsrandom in order to simulate applications that cannot
easily generate BAs of one of the easier fragments (unlike, e.g., \automizer,
which generates in most cases SDBAs) and have thus, so
far, not been seriously considered by the community due to the lack of
practically efficient BA complementation approaches (e.g., the automata-based
S1S decision procedure~\cite{buchi1962decision}).
All automata were preprocessed using \spot's \autfilt (using the \verb=--high=
simplification level),
and converted to the HOA format~\cite{BabiakBDKKM0S15}.
%
We~also removed trivial one-state BAs.
	In~the end, we were left with 4,533 (\dsrandom, \textcolor{blue}{\bf blue} data
  points), 1,716
	(\dsltl, \textcolor{red}{\bf red} data points), and 906 (\dsautomizer,
  \textcolor{green!70!black}{\bf green} data points) automata.
We use \dsall to denote their union (7,155 BAs).

\figoldspot

\vspace{-2mm}
\subsection{Effect of the Proposed Optimizations}
\vspace{-0mm}

\newcommand{
\begin{wraptable}[11]{r}{5.4cm}
\vspace*{3mm}
\caption{Effects of our optimizations for IW and SDBA automata.
  Sizes of output BAs are given as ``\dsboth (\dsltl{} : \dsautomizer)''.
  }
\label{tab:optim}
\vspace{-1mm}
\hspace*{-4mm}
\resizebox{5.9cm}{!}{
\newcolumntype{g}{>{\columncolor{Gray!30}}r}
\newcolumntype{f}{>{\columncolor{Gray!30}}l}
\newcolumntype{h}{>{\columncolor{Gray!30}}c}
\begin{tabular}{lggffrrll}
\toprule
\multicolumn{1}{c}{\textbf{method}}   & \multicolumn{4}{c}{\textbf{mean}}   & \multicolumn{4}{c}{\textbf{median}}   \\
\midrule
\rowcolor{GreenYellow}$\cobapr$ & 43.4 & (7.3  & \!:\! & 140.7) &  7 & (5                                    & \!:\!                               & 21)                                   \\
 \mihay                          & 46.1 & (10.9 & \!:\! & 141.3) &  7 & (6                                    & \!:\!                               & 23)                                   \\
 \midrule
 \rowcolor{GreenYellow}$\ncsbmaxrank$ & 30   & (20.3 & \!:\! & 38.3) & 12 & (8                                    & \!:\!                               & 28)                                   \\
 $\ncsblazy$                          & 35.7 & (25.1 & \!:\! & 44.8) & 13 & (9                                    & \!:\!                               & 32)                                   \\
 \bottomrule
\end{tabular}

}
\end{wraptable}
}[0]{
\begin{wraptable}[11]{r}{5.4cm}
\vspace*{3mm}
\caption{Effects of our optimizations for IW and SDBA automata.
  Sizes of output BAs are given as ``\dsboth (\dsltl{} : \dsautomizer)''.
  }
\label{tab:optim}
\vspace{-1mm}
\hspace*{-4mm}
\resizebox{5.9cm}{!}{
\newcolumntype{g}{>{\columncolor{Gray!30}}r}
\newcolumntype{f}{>{\columncolor{Gray!30}}l}
\newcolumntype{h}{>{\columncolor{Gray!30}}c}
\begin{tabular}{lggffrrll}
\toprule
\multicolumn{1}{c}{\textbf{method}}   & \multicolumn{4}{c}{\textbf{mean}}   & \multicolumn{4}{c}{\textbf{median}}   \\
\midrule
\rowcolor{GreenYellow}$\cobapr$ & 43.4 & (7.3  & \!:\! & 140.7) &  7 & (5                                    & \!:\!                               & 21)                                   \\
 \mihay                          & 46.1 & (10.9 & \!:\! & 141.3) &  7 & (6                                    & \!:\!                               & 23)                                   \\
 \midrule
 \rowcolor{GreenYellow}$\ncsbmaxrank$ & 30   & (20.3 & \!:\! & 38.3) & 12 & (8                                    & \!:\!                               & 28)                                   \\
 $\ncsblazy$                          & 35.7 & (25.1 & \!:\! & 44.8) & 13 & (9                                    & \!:\!                               & 32)                                   \\
 \bottomrule
\end{tabular}

}
\end{wraptable}
}

In the first part of the experimental evaluation, we measured the effect of the
proposed optimizations from \cref{sec:optimizations} on the size of the generated
state space, i.e., sizes of output automata without any postprocessing.
This use case is motivated by language inclusion and equivalence checking,
where the size of the generated state space directly affects the performance of
the algorithm.
We carried out the
evaluation on \dsltl and \dsautomizer benchmarks (we use \dsboth to denote their union) since most of the automata
there are either IW or SDBAs.

\begin{wraptable}[11]{r}{5.4cm}
\vspace*{3mm}
\caption{Effects of our optimizations for IW and SDBA automata.
  Sizes of output BAs are given as ``\dsboth (\dsltl{} : \dsautomizer)''.
  }
\label{tab:optim}
\vspace{-1mm}
\hspace*{-4mm}
\resizebox{5.9cm}{!}{

}
\end{wraptable}

The first experiment compares the number of states generated by the
original \mihay\ and by the macrostates-pruning optimization
$\cobapr$ from \cref{sec:iw-sim-opt} on inherently weak BAs (948 BAs from
\dsltl{} and 360 BAs from \dsautomizer{} = 1,308~BAs). Note that we omit $\cobasat$ as it is overall worse than $\cobapr$.
The scatter plot is shown in~\cref{fig:res-mihay} and statistics are in the top
part of \cref{tab:optim}.
We can clearly see that the optimization works well, substantially decreasing
both the mean and the median size of the output BAs.

The second experiment compares the size of the state space generated by
\ncsblazy~\cite{ChenHLLTTZ18} and
\ncsbmaxrank from \cref{sec:ncsb-maxrank} on 735~SDBAs (that
are not IW) from \dsltl{} (328~BAs) and \dsautomizer{} (407~BAs).
We omit a comparison with the original NCSB~\cite{BlahoudekHSST16} procedure,
since \ncsblazy behaves overall better~\cite{ChenHLLTTZ18}.
The results are in \cref{fig:res-ncsb} and the bottom part of \cref{tab:optim}.
Again, both the mean and the median are lower for \ncsbmaxrank.
The scatter plot shows that the effect of the optimization is stronger when the
generated state space is larger (for BAs where the output had $\geq$ 150
states, our optimization was never worse).

%

\vspace{-0mm}
\subsection{Comparison with Other Tools}
\vspace{-0mm}


  \begin{table}[t]
  \caption{Statistics for our experiments.
    The table compares the sizes of complement BAs obtained by \ranker and other
    approaches (after postprocessing).
    The \textbf{wins} and \textbf{losses} columns give the number of times when \ranker
    was strictly better and worse.
    The values are given for the three datasets as ``\dsall (\dsrandom{} :
    \dsltl{} : \dsautomizer)''.
    Approaches in \goal are labelled with~\goalmark.
    }
    \vspace{3mm}
  \label{tab:results}
  \hspace{-3mm}
  \resizebox{1.035\linewidth}{!}{%
    
  }
  \vspace*{-3mm}
  \end{table}

In the second part of the experimental evaluation, we compared \ranker with
other state-of-the-art tools for BA complementation.
We measured how small output BAs we can obtain, therefore, we compared the
number of states after reduction using \autfilt (with the simplification level
\texttt{--high}).
The scatter plots in Fig. \ref{fig:comp-others} compare the numbers of states of
automata generated by $\ranker$, $\rankerold$, and $\spot$. Summarizing
statistics are given in \cref{tab:results}.
The backoff strategy in \ranker was applied in 278 (264:1:13) cases.

\begin{wraptable}[11]{r}{6.3cm}
\vspace{-6mm}
\caption{Run times of the tools [s]
given as ``\dsall (\dsrandom{} : \dsltl{} : \dsautomizer)''}
\label{tab:times}
\vspace{-1mm}
\hspace*{-3mm}
\resizebox{6.6cm}{!}{

}
\end{wraptable}

First, observe that \ranker significantly outperforms $\rankerold$, especially
in the much lower number of timeouts, which decreased by 65\,\% (moreover, 66~of
the 158 timeouts were due to the timeout of \texttt{autfilt} in postprocessing).
The higher mean of $\ranker$ compared to $\rankerold$ is also caused by less
timeouts).
From \cref{tab:results}, we can also see
that \ranker has the smallest mean and median (except \roll and $\rankerold$,
but they have a~much higher number of timeouts).
\ranker has also the second lowest number of timeouts (\spot has the lowest).
If we look at the number of \textbf{wins} and \textbf{loses}, we can see that
\ranker in majority of cases produces a strictly smaller automaton compared to
other tools.
In \cref{tab:times}, see that the run time of \ranker is comparable to the run
times of other tools (much better than \goal and \roll, comparable with
\seminator, and a~bit worse than \spot and \ltldstar).



\paragraph{Acknowledgements.}
We thank the anonymous reviewers for their useful remarks that helped us improve
the quality of the paper, the artifact evaluation committee for their
thorough testing of the artifact,
and Alexandre Duret-Lutz for useful feedback on an earlier version of the paper.
This work was supported by the Czech Ministry of Education, Youth and Sports project LL1908 of the ERC.CZ programme,
the Czech Science Foundation project 20-07487S,
and the FIT BUT internal project FIT-S-20-6427.

\newpage

\bibliographystyle{splncs}
\bibliography{literature}

\newpage
\appendix

\section{Proofs of Section~\ref{sec:iw-sim-opt}}

\theFairSim*

\begin{proof}
	Assume that $p \dirsimbyw q$ for some $p,q\in Q$.
  To show that $p \fairsimbyc q$, we need to show that for every accepting
  trace~$\pi$ of~$\autc$ from~$p$ (i.e., since~$\autc$ is a~co-BA, a~trace with finitely
  many accepting states from $Q \setminus \accstates$), there is an accepting
  trace~$\pi'$ of~$\autc$ from~$q$ over the same word~$\word$.
  Since~$\pi$ is accepting in~$\autc$, there is some $n\in\omega$ s.t.\ for all
  $\ell \geq n$ it holds that $\pi_\ell \notin Q\setminus \accstates$, i.e.,
  $\pi_\ell \in \accstates$.
  Then we can construct~$\pi'$ as a~trace of~$\autc$ over~$\word$ that
  direct-simulates (w.r.t.\ $\dirsimbyw$) the trace~$\pi$.
  Because of the properties of direct simulation, it holds that for all $\ell
  \geq n$ we have that $\pi'_\ell \in \accstates$, i.e., $\pi'_\ell \notin Q
  \setminus \accstates$.
  Hence, $\pi'$ is
	accepting in the co-BA~$\autc$.
  \qed

\end{proof}

The rest of this section is devoted to the proof of Theorem~\ref{the:coba-corr}.
For that reason we introduce definitions and notions used further. In the
following we fix a co-BA $\autc = (Q, \delta, I, \accstates, \emptyset)$. We
use $p\leadsto q$ to denote that $q$ is reachable from $p$. Let
$\alpha\in\Sigma^\omega$ be a word. Let $\Pi, \Pi'$ be sets of traces over
$\alpha$. We say that $\Pi$ and $\Pi'$ are \emph{acc-equivalent}, denoted as
$\Pi \sim \Pi'$ if $\exists\pi\in\Pi: \pi$ is accepting in $\autc$ iff
$\exists\pi'\in\Pi': \pi'$ is accepting in $\autc$.
Let $\rho = S_1S_2\dots$ be a sequence of sets of states and $\alpha$ be a word.
We define $\Pi_\rho$ to be a set of traces over $\alpha$ matching the sets of
states. Formally, $\Pi_\rho = \{ \pi \mid \pi \text{ over } \alpha, \pi_i \in
S_i \text{ for each } i \}$. We also define $\Pi_\rho^\cup =
\bigcup_{i\in\omega}\Pi_{\rho_{i:\omega}}$. Further, for a set of state $B$ we
use $\rho_\alpha^B$ to denote the sequence $S_1S_2\dots$ s.t. $S_1 = B$,
$S_{i+1} = \delta(S_i, \alpha_i)$ for each $i\in\omega$. We use $\rho_\alpha$ to
denote $\rho_\alpha^I$. Moreover, for a given mapping $\theta: 2^Q \to 2^Q$ and
a sequence of sets of states $\rho$ we define $\theta(\rho) = \theta(\rho_1)
\theta(\rho_2)\dots$. A trace $\pi$ is \emph{eventually fair-simulated} by
$\pi'$ if there is some $i \in\omega$ s.t. $\pi_{i:\omega} \fairsimbyc
\pi'_{i:\omega}$.

\begin{lemma}\label{lem:trace-union}
	Let $\alpha$ be a word, $\Pi_{\rho_\alpha}\sim \Pi_{\rho_\alpha'}$, and
	$\Pi_{\rho_\alpha}\subseteq \Pi_{\rho_\alpha'}$. Then, $\Pi_{\rho_\alpha}\sim
	\Pi_{\rho_\alpha'}^\cup$.
\end{lemma}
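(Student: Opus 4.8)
The plan is to establish the two implications hidden in the equivalence $\sim$ separately. The ``forward'' one is immediate: by the inclusion hypothesis $\Pi_{\rho_\alpha} \subseteq \Pi_{\rho_\alpha'}$, and $\Pi_{\rho_\alpha'}$ is one of the sets $\Pi_{(\rho_\alpha')_{i:\omega}}$ whose union is $\Pi_{\rho_\alpha'}^\cup$ (the one for the trivial shift), so $\Pi_{\rho_\alpha} \subseteq \Pi_{\rho_\alpha'}^\cup$; hence, if $\Pi_{\rho_\alpha}$ contains a trace accepting in $\autc$, so does $\Pi_{\rho_\alpha'}^\cup$. Note that this direction uses only the inclusion, not $\sim$.

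For the converse, assume some $\pi \in \Pi_{\rho_\alpha'}^\cup$ is accepting in $\autc$, i.e.\ (as $\autc$ is a co-B\"uchi automaton with no accepting transitions) visits $\accstates$ only finitely often. Then $\pi \in \Pi_{(\rho_\alpha')_{k:\omega}}$ for some $k$, so $\pi$ is a trace over the suffix of $\alpha$ starting at position $k$ with $\pi_j \in (\rho_\alpha')_{k+j}$ for all $j$. The key step is to \emph{prepend a finite prefix} to $\pi$: a finite path $q_1 \ltr{\alpha_1} q_2 \ltr{\alpha_2} \cdots \ltr{\alpha_{k-1}} q_k$ with $q_i \in (\rho_\alpha')_i$ and $q_k = \pi_0$. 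Such a prefix exists because the sequence $\rho_\alpha'$ is generated from $\alpha$ via $\delta$, so every state of $(\rho_\alpha')_{i+1}$ has a $\delta$-predecessor over $\alpha_i$ inside $(\rho_\alpha')_i$, and thus $q_k, q_{k-1}, \dots, q_1$ can be chosen by downward induction on $i$. The concatenation $\hat\pi = q_1 \cdots q_{k-1}\pi$ is then a trace in $\Pi_{\rho_\alpha'}$ — the prefix respects $(\rho_\alpha')_1, \dots, (\rho_\alpha')_k$ by construction and $\pi$ respects the rest — and it visits $\accstates$ only on the finite prefix plus on the finitely many positions inside $\pi$, hence is accepting in $\autc$. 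Finally, $\Pi_{\rho_\alpha} \sim \Pi_{\rho_\alpha'}$ hands us an accepting trace in $\Pi_{\rho_\alpha}$, as required.

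I expect the prefix-prepending step to be the only delicate point: it rests on the structural fact that each $(\rho_\alpha')_{i+1}$ consists of $\delta$-successors over $\alpha_i$ of states in $(\rho_\alpha')_i$. This is transparent in the intended use of the lemma, where $\rho_\alpha'$ is a (possibly prefix-pruned) subset-construction run $\rho_\alpha^B$, since pruning only deletes states. In a saturating setting a state $q \in (\rho_\alpha')_k$ might lack a predecessor in $(\rho_\alpha')_{k-1}$; there I would first replace $q = \pi_0$ by a genuinely reachable $q' \in \delta((\rho_\alpha')_{k-1}, \alpha_{k-1}) \subseteq (\rho_\alpha')_k$ with $q \fairsimbyc q'$, transfer the accepting tail trace from $q$ to one from $q'$ that stays inside the $(\rho_\alpha')$-tube from position $k$ onwards (using that the tube is $\delta$-closed there), and only then run the backward extension from $q'$. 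Modulo this bookkeeping the argument is routine; the genuinely load-bearing assumption is $\Pi_{\rho_\alpha} \sim \Pi_{\rho_\alpha'}$, which is precisely what transports acceptance back into $\Pi_{\rho_\alpha}$ at the end.
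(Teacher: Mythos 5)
Your proof is correct and matches the paper's argument in substance: the paper handles the second direction in contrapositive form (no accepting trace in $\Pi_{\rho_\alpha}$ implies, via $\sim$, that every trace of $\Pi_{\rho_\alpha'}$ and hence every suffix of such a trace visits $\accstates$ infinitely often), but that step rests on exactly the backward-extendability fact you isolate --- that every trace in some $\Pi_{(\rho_\alpha')_{k:\omega}}$ prolongs backwards to a trace in $\Pi_{\rho_\alpha'}$ --- which the paper leaves entirely implicit. Your explicit treatment of the prefix-prepending step, including the caveat that it can fail for sequences that are not $\delta$-generated (e.g.\ saturated ones) and the sketch of how to repair it there, is if anything more careful than the published proof.
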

\begin{proof}
	Assume that $\Pi_{\rho_\alpha}\sim \Pi_{\rho_\alpha'}$, and
	$\Pi_{\rho_\alpha}\subseteq \Pi_{\rho_\alpha'}$. Since $\Pi_{\rho_\alpha'}
	\subseteq \Pi_{\rho_\alpha'}^\cup$, if there is an accepting trace in
	$\Pi_{\rho_\alpha}$, there is (the same) accepting trace in
	$\Pi_{\rho_\alpha'}^\cup$. If there is no accepting trace in
	$\Pi_{\rho_\alpha}$, it means that all traces contain infinitely many accepting
	states. Hence, every infinite suffix is also an accepting trace and therefore,
	$\Pi_{\rho_\alpha'}^\cup$ contain all traces that are not accepting (with
	infinitely many accepting states).
	\qed
\end{proof}

\begin{lemma}\label{lem:pr-trace}
	Let $\alpha$ be a word. Then, $\Pi_{\rho_\alpha}\sim \Pi_{\pr(\rho_\alpha)}$.
\end{lemma}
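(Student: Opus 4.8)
The statement is the conjunction of the two acc-equivalence inclusions. One of them is immediate: since $\pr(S)\subseteq S$ for every $S\subseteq Q$, we have $\Pi_{\pr(\rho_\alpha)}\subseteq\Pi_{\rho_\alpha}$, so any trace that is accepting in $\autc$ and lies in $\Pi_{\pr(\rho_\alpha)}$ also lies in $\Pi_{\rho_\alpha}$. All the work is in the converse: given an accepting trace $\pi\in\Pi_{\rho_\alpha}$ (a run of $\autc$ from $I$ that visits $\accstates$ only finitely often), produce an accepting trace that in addition satisfies $\pi'_i\in\pr(S_i)$ for every $i$, where $\rho_\alpha=S_1S_2\cdots$.

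Two facts will be the engine of the argument. First, $\pr(S)$ is exactly the set of $\sqsubseteq$-maximal elements of $S$: clauses (ii)--(iii) in the definition of $\sqsubseteq$ make it a partial order, any set satisfying the covering condition must contain all $\sqsubseteq$-maximal elements, and in a finite poset those already cover everything, so the lexicographically smallest covering set coincides with the set of maximal elements. Second, fair simulation has the standard transfer property (its game being positionally determined): if $p\fairsimbyc q$ and Spoiler plays some fixed run from $p$ over $\alpha$ that is accepting in $\autc$, a winning Duplicator strategy answers with a run from $q$ over $\alpha$ that is accepting in $\autc$ and stays pointwise $\fairsimbyc$-above Spoiler's run; once such a Duplicator run starts in some $S_i$, it stays inside the sequence $\rho_\alpha$, since $\delta$ is applied setwise.

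The plan for the converse is to choose the accepting trace so that pruning cannot touch it. Every run of the finite automaton $\autc$ is eventually trapped in a single SCC, so among all accepting traces of $\Pi_{\rho_\alpha}$ there is one, $\pi$, whose eventual SCC $C^\star$ is $\leadsto$-maximal, i.e.\ no accepting trace of $\Pi_{\rho_\alpha}$ has its eventual SCC strictly reachable from, and different from, $C^\star$. Fix $m$ past which $\pi$ stays in $C^\star$ and avoids $\accstates$. Then $\pi_i\in\pr(S_i)$ for every $i\ge m$: otherwise there is $q\in S_i$ with $\pi_i\sqsubsetneq q$, which forces $q\notin C^\star$ with $C^\star\leadsto q$ and $\pi_i\fairsimbyc q$; prefixing $q$ with any run of $\autc$ from $I$ that reaches $q$ at position $i$ (one exists because $q\in S_i=\delta(I,\alpha_0\cdots\alpha_{i-1})$) and suffixing it with the Duplicator run from $q$ that simulates the $\accstates$-free tail of $\pi$ yields a full accepting trace whose eventual SCC is strictly downstream of $C^\star$, contradicting maximality. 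This settles the tail; what remains is to bend the finite prefix $\pi_0,\dots,\pi_{m-1}$ into $\pr(S_0),\dots,\pr(S_{m-1})$.

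The main obstacle is exactly this prefix — and, more fundamentally, the fact that replacing a trace state by a $\sqsubseteq$-larger one destroys the transition link with the preceding state, so pointwise substitution is not available. I expect to handle it by an induction that works from position $0$: whenever the current accepting trace first leaves $\pr$ at some earliest position, replace it by a Duplicator-simulating run started at a $\sqsubseteq$-maximal state of the relevant $S_j$ and driven as far downstream as the word permits; the reachability clauses (ii)--(iii) of $\sqsubseteq$ together with the $\leadsto$-maximality of the eventual SCC should guarantee that each replacement either pushes the first position at which the trace leaves $\pr$ forward or strictly descends in the finite SCC DAG, so the process terminates at a trace lying entirely in $\Pi_{\pr(\rho_\alpha)}$. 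An alternative route is to first prove the weaker eventual version $\Pi_{\rho_\alpha}\sim\Pi_{\pr(\rho_\alpha)}^{\cup}$ by the tail argument above and then recover the full statement by the same finite-prefix manipulation; this eventual version sits naturally alongside Lemma~\ref{lem:trace-union}.
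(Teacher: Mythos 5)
Your easy inclusion is fine, and your treatment of the tail is correct and cleanly argued: choosing an accepting trace whose eventual SCC $C^\star$ is $\leadsto$-maximal among the eventual SCCs of accepting traces in $\Pi_{\rho_\alpha}$, and using clauses (ii)--(iii) of $\sqsubseteq$ to show that a strict dominator of a state of $C^\star$ would yield an accepting trace trapped strictly downstream of $C^\star$, does give $\pi_i\in\pr(S_i)$ for all sufficiently large $i$. But the lemma requires a trace lying in $\pr(S_i)$ at \emph{every} position, and that part of your argument is only a plan, not a proof. The iteration you sketch lacks a termination measure: when you restart at a $\sqsubseteq$-maximal $q\in\pr(S_j)$, the new trace needs \emph{some} prefix from $I$ reaching $q$ at position $j$, and that prefix is arbitrary --- it may leave $\pr(S_1),\dots,\pr(S_{j-1})$ already at position $1$. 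So ``the first position at which the trace leaves $\pr$'' can regress at each replacement, and the fallback ``strictly descends in the SCC DAG'' is not attached to any well-founded quantity that survives the change of position; as written the process could cycle. Your own remark that pointwise substitution destroys the transition link is exactly the difficulty, and the proposal does not overcome it.

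The paper closes this with a global chain argument rather than a position-by-position repair: starting from the accepting $\pi$, it takes a maximal chain of traces $\pi=\pi^1,\pi^2,\dots$ in $\Pi_{\rho_\alpha}$ such that $\pi^{i+1}$ dominates $\pi^i$ w.r.t.\ $\sqsubseteq$ at some position $\ell_i$ (with $\ell_1<\ell_2<\cdots$), and shows the chain is finite because the dominating states $p'_i$ are pairwise distinct: if $p'_i=p'_j$ for $i<j$, then $p'_i\leadsto p_j\sqsubseteq p'_j=p'_i$ contradicts the non-reachability clause of $\sqsubseteq$. The last trace of the chain is then the witness, accepting because it eventually fair-simulates $\pi$. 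This distinctness bound (hence a chain length at most $|Q|$) is the well-founded measure your sketch is missing; your replacement idea is the same in spirit but never supplies it. (Admittedly the paper's own justification that the final trace lies in $\pr(S_i)$ at \emph{all} positions, not just beyond $\ell_{n-1}$, is also terse.) Note finally that what you actually established is essentially $\Pi_{\rho_\alpha}\sim\Pi^{\cup}_{\pr(\rho_\alpha)}$, which is what Lemma~\ref{lem:adjust-corr} consumes downstream, but it is strictly weaker than the statement you were asked to prove.
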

\begin{proof}
	First observe that $\Pi_{\pr(\rho_\alpha)} \subseteq \Pi_{\rho_\alpha}$.
	Therefore, it suffices to show that if there is an accepting trace $\pi \in
	\Pi_{\rho_\alpha}$, there is also an accepting trace $\pi' \in
	\Pi_{\pr(\rho_\alpha)}$. We assume that and we show that there is $\pi'
	\in \Pi_{\pr(\rho_\alpha)}$ s.t. $\pi$ is eventually fair-simulated by
	$\pi'$. If $\pi' = \pi$ we are done. Now, assume that it is not the case and
	that there is a maximum set of traces $P = \{ \pi^1, \pi^2, \dots \}\subseteq
	\Pi_{\rho_\alpha}$ with indices $\ell_1 < \ell_2 < \dots$ s.t. $p_i =
	\pi^i_{\ell_i} \sqsubseteq \pi^{i+1}_{\ell_i} = p_i'$ for each $i$, and
	moreover $\pi_1 = \pi$. We show that $P$ is in fact finite by showing that
	$p'_i \neq p'_j$ for each $i\neq j$. Assume that $p_j' = p'_i$ for some $i<j$.
	But then we have $p_i' \leadsto p_j\sqsubseteq p_j' = p_i'$ meaning that $p_i'
	\leadsto p_j'$ (from the definition of $\sqsubseteq$). From the definition of
	$\sqsubseteq$ we also have $p_j$ is not reachable from $p_j' = p_i'$, which is
	a contradiction. Since the set $P = \{ \pi_1, \dots, \pi_n \}$ is maximum and
	finite, we have $\pi_n \in \Pi_{\pr(\rho_\alpha)}$. Moreover, $\pi' =
	\pi_n$ eventually fair-simulates $\pi$ (given by the step-wise property of
	fair simulation), which concludes the proof.
	\qed
\end{proof}

\begin{lemma}\label{lem:sat-trace}
	Let $\alpha$ be a word. Then, $\Pi_{\rho_\alpha}\sim \Pi^\cup_{\sat(\rho_\alpha)}$.
\end{lemma}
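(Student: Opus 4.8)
The plan is to follow the same pattern as the proof of \cref{lem:pr-trace} but with the inclusion reversed, and then close the gap with \cref{lem:trace-union}. Concretely, I would first establish the two facts $\Pi_{\rho_\alpha} \subseteq \Pi_{\sat(\rho_\alpha)}$ and $\Pi_{\rho_\alpha} \sim \Pi_{\sat(\rho_\alpha)}$, and then invoke \cref{lem:trace-union} with the second sequence taken to be $\sat(\rho_\alpha)$, which yields exactly $\Pi_{\rho_\alpha} \sim \Pi^\cup_{\sat(\rho_\alpha)}$. The inclusion is immediate: since fair simulation is reflexive ($p \fairsimbyc p$ for every $p$), we have $S_i \subseteq \sat(S_i)$ for all $i$, hence every trace over $\alpha$ matching $\rho_\alpha$ also matches $\sat(\rho_\alpha)$.

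For the acc-equivalence $\Pi_{\rho_\alpha} \sim \Pi_{\sat(\rho_\alpha)}$, one implication is just the inclusion above (an accepting trace in $\Pi_{\rho_\alpha}$ is already an accepting trace in $\Pi_{\sat(\rho_\alpha)}$). For the converse, suppose $\pi'$ is an accepting trace in $\Pi_{\sat(\rho_\alpha)}$, i.e., a trace over $\alpha$ with $\pi'_i \in \sat(S_i)$ for all $i$ and with only finitely many accepting states (recall $\autc$ is a co-BA). By the definition of $\sat$, the initial state of $\pi'$ lies in $\sat(S_1) = \sat(I)$ and is therefore fair-simulated by some $q \in I$, say $\pi'_1 \fairsimbyc q$. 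Viewing $\pi'$ as a trace of $\autc$ from $\pi'_1$ over $\alpha$ with finitely many accepting states, fair simulation yields a trace $\sigma$ of $\autc$ from $q$ over $\alpha$ that likewise has only finitely many accepting states, i.e., $\sigma$ is accepting in $\autc$. Since $q \in I$ and $\rho_\alpha$ is precisely the subset-construction sequence of $\autc$ on $\alpha$ started from $I$ (so each $S_{i+1}$ is the $\delta$-image of $S_i$), a straightforward induction shows $\sigma_i \in S_i$ for all $i$; hence $\sigma \in \Pi_{\rho_\alpha}$ is the required accepting trace.

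Combining these two facts via \cref{lem:trace-union} gives $\Pi_{\rho_\alpha} \sim \Pi^\cup_{\sat(\rho_\alpha)}$, which completes the proof. The only genuinely delicate point is the last step above — that the trace produced by fair simulation out of an initial state remains inside the reachable-set sequence $\rho_\alpha$ — which holds because $\rho_\alpha$ is the full powerset run of $\autc$ from $I$ and is therefore closed under $\delta$; apart from this, one must keep in mind that $\fairsimbyc$ is defined so as to preserve the co-Büchi acceptance condition (finitely many accepting states), which is exactly the property being transported from $\pi'$ to $\sigma$.
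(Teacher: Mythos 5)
Your first two steps are sound: $\Pi_{\rho_\alpha}\subseteq\Pi_{\sat(\rho_\alpha)}$ by reflexivity of $\fairsimbyc$, and transferring an accepting trace of $\Pi_{\sat(\rho_\alpha)}$ back into $\Pi_{\rho_\alpha}$ via a fair-simulating state of $I$ is the right mechanism. The gap is the last step, where you invoke \cref{lem:trace-union} with second sequence $\sat(\rho_\alpha)$. The two hypotheses you verified do \emph{not} imply that lemma's conclusion for an arbitrary second sequence: take a co-BA with states $\{a,c\}$, $\accstates=\{a\}$, self-loops $a\to a$ and $c\to c$ on a single letter, $I=\{a\}$, so $\rho_\alpha=\{a\}\{a\}\cdots$, and set $\rho'=\{a\}\{a,c\}\{a,c\}\cdots$. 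Then $\Pi_{\rho_\alpha}=\Pi_{\rho'}=\{a^\omega\}$, so both the inclusion and the acc-equivalence hold, yet $c^\omega\in\Pi^\cup_{\rho'}$ is accepting (it never visits $\accstates$) while $\Pi_{\rho_\alpha}$ has no accepting trace. The proof of \cref{lem:trace-union} tacitly needs every trace in $\Pi_{\rho'_{\ell:\omega}}$ to be a suffix of a trace in $\Pi_{\rho'}$, i.e., every state of the $\ell$-th set to have a predecessor in the $(\ell-1)$-th; this holds for the exact-image sequences $\rho^B_\alpha$ but fails for $\sat(\rho_\alpha)$, since saturation at position $\ell$ may insert states with no predecessor in $\sat(S_{\ell-1})$.

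Those freshly inserted mid-word states are exactly what $\Pi^\cup$ is there to capture, and your argument never touches them: you apply the fair-simulation transfer only at position $1$. The repair is to run the same transfer at an arbitrary position $\ell$, which is what the paper does: given an accepting $\pi\in\Pi_{\sat(\rho_\alpha)_{\ell:\omega}}$, you have $\pi_1\fairsimbyc q$ for some $q\in S_\ell$; fair simulation yields an accepting trace from $q$ over the suffix $\alpha_{\ell:\omega}$, and since each $S_i$ is the exact $\delta$-image of $S_{i-1}$, the state $q$ is reachable from $I$ by a finite trace lying inside $\rho_\alpha$; prepending that finite prefix cannot spoil co-B\"uchi acceptance, so the concatenation is an accepting trace of $\Pi_{\rho_\alpha}$. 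With that replacement your proof goes through, and \cref{lem:trace-union} is not needed at all.
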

\begin{proof}
	First observe that $\Pi_{\rho_\alpha} \subseteq \Pi^\cup_{\sat(\rho_\alpha)}$.
	Therefore, it suffices to show that if there is an accepting trace $\pi \in
	\Pi^\cup_{\sat(\rho_\alpha)}$, there is also an accepting trace $\pi' \in
	\Pi_{\rho_\alpha}$. We fix $\rho = \rho_\alpha$. Consider some accepting trace
	$\pi \in \Pi^\cup_{\sat(\rho_\alpha)}$. If $\pi \in \Pi_{\rho_\alpha}$, we are
	done. If not, there is some position $\ell$ s.t. $\pi \in
	\Pi_{\rho_{\ell:\omega}}$ and $\pi_1 \fairsimby q$ where $q\in\rho_\ell$.
	Therefore, there is some trace $\pi' \in\rho$ s.t. $\pi'_\ell = q$. Moreover,
	$\pi$ is accepting, hence there is a trace $\pi''$ leading from $q$, which is
	accepting as well. Hence, $\pi'_{1:\ell}.\pi'' \in \rho$ and moreover this
	trace is accepting.
	\qed
\end{proof}

\begin{lemma}\label{lem:adjust-corr}
	Let $\theta$ be an adjusting function. If $\Pi_{\rho_\alpha} \sim \Pi_{\theta(\rho_\alpha)}^\cup$ for each $\alpha
	\in\Sigma^\omega$ then $\langof{\cobapar(\autc)} = \Sigma^\omega \setminus
	\langof{\autc}$.
\end{lemma}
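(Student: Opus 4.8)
The plan is to fix an arbitrary word $\alpha\in\Sigma^\omega$ and establish $\alpha\in\langof{\cobapar(\autc)}$ iff $\alpha\notin\langof{\autc}$. Since $\cobapar(\autc)$ is deterministic it has a unique run on $\alpha$; write the sequence of its $S$-components as $\sigma=\sigma_1\sigma_2\cdots$ (so $\sigma_1=\theta(I)$ and $\sigma_{i+1}=\theta(\delta(\sigma_i,\alpha_i))$) and of its $B$-components as $B_1B_2\cdots$. Because the accepting macrostates are exactly those with empty $B$-component, the run is accepting iff $B_i=\emptyset$ for infinitely many $i$.

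The first part is a Miyano--Hayashi-style breakpoint argument showing that $B_i=\emptyset$ infinitely often iff $\Pi_\sigma^\cup$ contains no trace accepting in $\autc$ (no trace visiting $\accstates$ only finitely often). For one direction, assume $B$ stays nonempty from some position $m_0$ on; the update $B'=(\delta(B,a)\cap S')\setminus\accstates$ with $S'=\sigma_{i+1}$, together with the resets $B'=\sigma_i\setminus\accstates$, makes the tree of ``$B$-histories'' rooted at $m_0$ infinite, finitely branching, and with all nodes in $\sigma_i\setminus\accstates$, so König's lemma provides an infinite branch, i.e.\ an eventually-$\accstates$-avoiding trace lying in $\Pi_{\sigma_{m_0:\omega}}\subseteq\Pi_\sigma^\cup$. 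For the converse, given an eventually-$\accstates$-avoiding trace $\pi\in\Pi_{\sigma_{j:\omega}}$, a straightforward forward induction shows that past its last visit to $\accstates$ the trace $\pi$ is re-admitted to $B$ after each reset and then keeps $B$ nonempty, so $B$ empties only finitely often.

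It remains to prove $\Pi_\sigma^\cup\sim\Pi_{\rho_\alpha}$, since $\Pi_{\rho_\alpha}$ is exactly the set of runs of $\autc$ on $\alpha$ and therefore has an accepting trace iff $\alpha\in\langof{\autc}$; combining this with the breakpoint characterisation finishes the lemma. This step uses the hypothesis and is the crux: $\Pi_{\rho_\alpha}\sim\Pi_{\theta(\rho_\alpha)}^\cup$ only accounts for one pointwise application of $\theta$, whereas $\sigma$ interleaves $\theta$ with every subset step. I would bridge this by an induction on the number of applied $\theta$'s: for $n\geq 0$ let $\sigma^{(n)}$ be the subset sequence that agrees with $\sigma$ on its first $n$ positions and thereafter continues as a plain $\delta$-subset trajectory from $\sigma_n$, so $\sigma^{(0)}=\rho_\alpha$ and the $\sigma^{(n)}$ tend to $\sigma$. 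The step from $\sigma^{(n-1)}$ to $\sigma^{(n)}$ only replaces, from position $n$ on, the plain trajectory from a set $B$ by the one from $\theta(B)$; using that plain subset trajectories are forward-closed (at each position they carry every continuation of the states they track), the hypothesis, and Lemma~\ref{lem:trace-union} to absorb the extra $\cup$, one gets $\Pi_{\sigma^{(n)}}^\cup\sim\Pi_{\sigma^{(n-1)}}^\cup$, hence $\Pi_{\sigma^{(n)}}^\cup\sim\Pi_{\rho_\alpha}$ for all $n$. Finally, since $\sigma^{(n)}$ and $\sigma$ are ordered by pointwise inclusion from position $n$ on (the direction depending on whether $\theta$ prunes or saturates), one inclusion between $\Pi_\sigma^\cup$ and $\Pi_{\sigma^{(n)}}^\cup$ holds for every $n$ and yields one direction of $\Pi_\sigma^\cup\sim\Pi_{\rho_\alpha}$ immediately, while a compactness argument on finite eventually-$\accstates$-avoiding $\sigma$-traces (again König's lemma) recovers the other; chaining the equivalences gives the claim for this $\alpha$, hence for all $\alpha$. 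The main obstacle is exactly this last step — reconciling the single pointwise hypothesis with the $\theta$-interleaved evolution of the construction and making the passage from the $\sigma^{(n)}$ to their limit $\sigma$ rigorous; the breakpoint correctness in the first part, by contrast, is routine once the role of $S'$ in the $B$-update is pinned down.
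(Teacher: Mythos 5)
Your breakpoint analysis (part~(a)) is fine and matches the standard Miyano--Hayashi argument, and you have correctly spotted the delicate point: the $S$-components of the run of $\cobapar(\autc)$ form the \emph{interleaved} sequence $\sigma_1=\theta(I)$, $\sigma_{i+1}=\theta(\delta(\sigma_i,\alpha_i))$, whereas the hypothesis $\Pi_{\rho_\alpha}\sim\Pi^\cup_{\theta(\rho_\alpha)}$ speaks about a single pointwise application of $\theta$ to the plain subset sequence $\rho_\alpha$. The paper's own (sketchy) proof does not attempt any such bridge: it simply reads the hypothesis as describing the sets tracked by the construction and argues that an accepting trace of $\Pi^\cup_{\theta(\rho_\alpha)}$ eventually blocks the $B$-component, respectively that the absence of accepting traces forces infinitely many breakpoints. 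So the question is whether your bridge actually closes the distance you identified, and here there is a genuine gap.

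The induction step $\Pi^\cup_{\sigma^{(n-1)}}\sim\Pi^\cup_{\sigma^{(n)}}$ is not justified by the stated hypothesis. Passing from $\sigma^{(n-1)}$ to $\sigma^{(n)}$ replaces, from position $n$ on, the plain trajectory issued from $D=\delta(\sigma_{n-1},\alpha_{n-1})$ by the plain trajectory issued from $\theta(D)$; what you would need is an acc-equivalence between $\Pi_{\rho^{D}_\beta}$ and (suffix-unions over) $\Pi_{\rho^{\theta(D)}_\beta}$ for the suffix word $\beta$. The hypothesis gives no such instance: it only compares the plain trajectory \emph{from $I$} with its \emph{everywhere}-$\theta$'d version, whose sets are $\theta(\delta^i(I))$, not $\delta^i(\theta(D))$; these are different objects, and no monotonicity of $\theta$ is assumed that would relate them. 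Moreover, the limit step fails in one direction even granting the induction: from an accepting trace in every $\Pi^\cup_{\sigma^{(n)}}$ you cannot extract one in $\Pi^\cup_{\sigma}$ by K\"onig's lemma, because the witness for $\sigma^{(n)}$ is constrained to $\sigma$'s sets only up to position $n$ and its $\accstates$-avoiding tail may live entirely in the (larger, plain) sets beyond that position, so there is no infinite finitely-branching tree of $\accstates$-avoiding $\sigma$-traces to apply compactness to. What actually closes the argument is not a derivation from the from-$I$ pointwise hypothesis but an acc-equivalence property that can be invoked \emph{at every step} of the interleaved sequence, i.e.\ stated for trajectories from arbitrary (reachable) sets --- which the concrete proofs for $\pr$ and $\sat$ (Lemmas~\ref{lem:pr-trace} and~\ref{lem:sat-trace}) do support, since their simulation arguments never use that the starting set is $I$. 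As written, though, your reduction of the interleaved case to the stated hypothesis does not go through, and you yourself flag this step as unresolved, so the crux of the lemma remains unproved in your proposal.
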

\begin{proof}
	\emph{(Sketch)}
	Consider a word $\alpha\in\langof{\autc}$. Hence, there is an accepting trace
	$\pi\in\Pi_{\rho_\alpha}$ and also an accepting trace $\pi' \in
	\Pi_{\theta(\rho_\alpha)_{k:\omega}}$ for some $k \in\omega$. Since, $\pi'$
	emerges eventually in the $B$ set, $\alpha$ is not accepted by
	$\langof{\cobapar(\autc)}$.

	Conversely, assume that $\alpha\not\in\langof{\autc}$. Then, all traces in
	$\Pi_{\rho_\alpha}$ as well in $\Pi_{\theta(\rho_\alpha)}^\cup$ contain
	infinitely many accepting states. Hence, we flush $B$-set infinitely many
	times. yielding $\alpha \in \langof{\cobapar(\autc)}$.
	\qed
\end{proof}

\theCoBaCorr*

\begin{proof}
	Theorem~\ref{the:coba-corr} for $\cobapr(\autc)$ we get directly from the fact
	that $\Pi_{\pr(\rho_\alpha)} \subseteq \Pi_{\rho_\alpha}$ for any word $\alpha$,
	and from Lemmas~\ref{lem:trace-union}, \ref{lem:pr-trace}, and
	\ref{lem:adjust-corr}. Correctness of for $\cobasat(\autc)$ we obtain from
	Lemmas~\ref{lem:sat-trace} and~\ref{lem:adjust-corr}.
\end{proof}

\section{Proofs of Section~\ref{sec:ncsb-maxrank}}

\begin{lemma}\label{lem:sd-b}
	Let $B\subseteq \stdet$ be a set of deterministic states and let $\alpha$ be a
	word. If $\alpha \notin \langof\aut$, then $\exists k: \forall \ell \geq k:
	(\rho^B_\alpha)_\ell \cap \accstates = \emptyset \wedge
	\acctrans((\rho^B_\alpha)_\ell, \alpha_\ell) = \emptyset$.
\end{lemma}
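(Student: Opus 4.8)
The plan is to argue by contradiction: suppose $\alpha \notin \langof\aut$ but the claimed index $k$ does not exist. Then the set $T = \{ \ell \mid (\rho^B_\alpha)_\ell \cap \accstates \neq \emptyset \text{ or } \acctrans((\rho^B_\alpha)_\ell, \alpha_\ell) \neq \emptyset \}$ is infinite. I would first unfold the definition of $\rho^B_\alpha = S_1 S_2 \dots$ with $S_1 = B$ and $S_{i+1} = \delta(S_i, \alpha_i)$, so each $S_i = \delta(B, \alpha_0 \cdots \alpha_{i-1})$ is exactly the set of states reachable from~$B$ by reading the prefix of~$\alpha$ of length $i-1$. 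Since $B \subseteq \stdet$ and $\stdet$ is closed under~$\delta$ (the deterministic ``tail'' of the SDBA can only be left by never re-entering it, and in fact $\stdet$ is forward-closed by construction, as $\stdet$ collects exactly the states reachable from accepting states/transitions and any state reachable from such a state is again reachable from it), every $S_i \subseteq \stdet$.

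The key step is to extract an accepting run from the infinitely many ``witnessing'' positions in $T$ using a König's lemma / compactness argument. Build a finitely-branching tree whose nodes at depth~$i$ are the states in $S_i$, with an edge from $p \in S_i$ to $q \in S_{i+1}$ whenever $p \ltr{\alpha_i} q \in \delta$; the root layer is~$B$. This tree is infinite (each $S_i$ is nonempty because $T$ is infinite, hence $S_i \neq \emptyset$ for all~$i$) and finitely branching, so by König's lemma it has an infinite path, i.e.\ a run $\pi$ of $\aut$ on $\alpha$ starting from some state in~$B \subseteq \stdet$. However, a single branch need not hit $T$ infinitely often, so I instead need to be more careful: for each $\ell \in T$ pick a witnessing state $q_\ell \in S_\ell$ (one carrying an accepting state, or being the source of an accepting transition on $\alpha_\ell$), and note that since $q_\ell \in \stdet$ and $\aut$ restricted to $\stdet$ is deterministic, the run of $\aut[q_\ell]$ on the suffix $\alpha_{\ell:\omega}$ is unique; pushing these witnesses back through the (branching but reachable-from-$B$) prefix and applying König's lemma to the tree of run-prefixes that can still be extended to pass through infinitely many witnesses yields a single run $\pi$ from $B$ that visits accepting states or takes accepting transitions infinitely often. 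This $\pi$ is accepting, and since $B \subseteq \stdet$ consists of states reachable from an accepting state/transition, prepending a finite accepting-prefixed run reaching some state of $B$ from an initial state gives an accepting run of $\aut$ on a word of the form $w\alpha$ — but that is not quite $\alpha$.

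To repair the last point cleanly, I would instead phrase the contradiction directly in terms of $\langaut{B}$: the argument above shows that if $T$ is infinite then $\aut$ has an accepting run on $\alpha$ starting from some $b \in B$, i.e.\ $\alpha \in \langaut{B}$; but in an SDBA every state of $\stdet$ (in particular every $b \in B$) lies on an accepting run from an initial state, and — this is the real content — $\langaut{B} \subseteq \langof\aut$ is false in general, so I must exploit the semi-deterministic structure more sharply. The clean way: since each $q_\ell \in \stdet$ is reachable from an accepting state or transition, there is an accepting run of $\aut$ from an \emph{initial} state that reaches $q_\ell$ and then, by determinism of the $\stdet$-part, continues along the unique run on $\alpha_{\ell:\omega}$; if for infinitely many $\ell$ this continuation is accepting we are done, and otherwise all but finitely many such continuations are rejecting, which combined with the witnessing property at position $\ell$ forces the accepting activity to occur strictly before reaching $q_\ell$ along $\alpha$ infinitely often — again yielding an accepting run of $\aut$ on $\alpha$ from an initial state via König's lemma on the initial part. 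Either way $\alpha \in \langof\aut$, contradicting the hypothesis; hence $T$ is finite and any $k > \max T$ works.

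The main obstacle I anticipate is exactly this bookkeeping: turning ``infinitely many layers $S_\ell$ contain an accepting state or an outgoing accepting transition'' into ``there is a \emph{single} run from $B$ (and hence from $I$) that is accepting on $\alpha$,'' since naively a different branch witnesses each $\ell$. The determinism of the $\stdet$-component is what makes this go through — once a run enters $\stdet$ it is pinned down, so the only branching is in the (finite) prefix before entering $\stdet$, or in the nondeterministic part $\stnondet$ which $B$ has already left; a straightforward König's lemma on run-prefixes that remain extendable to pass through infinitely many witnesses then suffices. I would keep the write-up at the level of ``by König's lemma / semi-determinism, infinitely many accepting witnesses along $\rho^B_\alpha$ yield an accepting run of $\aut$ on $\alpha$, contradiction,'' rather than grinding through the tree construction explicitly.
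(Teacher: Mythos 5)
Your core mechanism is the right one, but you reach it the long way around, and the paper's argument is much more direct. Because $B\subseteq\stdet$ and the $\stdet$-part is deterministic and forward-closed, $|\Pi_{\rho^B_\alpha}|\leq|B|$: there are only finitely many traces from $B$ over $\alpha$ and each is completely pinned down from its first state onward. The paper then simply says that each such trace $\pi$, being non-accepting, has its own threshold $k_\pi$ past which it visits no accepting state and takes no accepting transition, and $k=\max_\pi k_\pi$ does the job. No contradiction, no K\"onig's lemma, no tree of extendable run-prefixes: the ``finitely branching tree'' you build has at most $|B|$ branches in total, so the compactness apparatus collapses to a pigeonhole that you do not even need once you argue trace by trace. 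You do eventually observe that determinism pins everything down, but the write-up you describe is organized around machinery that the structure of the problem makes superfluous.

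The genuine defect is your ``repair'' of the step from ``some trace from $B$ over $\alpha$ is accepting'' to ``$\alpha\in\langof\aut$''. You are right to be suspicious: $\langaut{B}\subseteq\langof\aut$ fails for an arbitrary $B\subseteq\stdet$, and the lemma as literally stated tacitly relies on $B$ being a set of states that $\aut$ actually reaches from $I$ on a prefix of the word under consideration (which is the case in both places the main theorem invokes it, where $B$ is a component of a reachable macrostate). But the fix you propose does not work: a run witnessing that $q_\ell$ is ``reachable from an accepting state/transition'' reaches $q_\ell$ along some word $w$ that has nothing to do with $\alpha_{0:\ell}$, so continuing deterministically on $\alpha_{\ell:\omega}$ produces an accepting run on $w\,\alpha_{\ell:\omega}$, not on $\alpha$, and no contradiction with $\alpha\notin\langof\aut$ follows. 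The same slip is hidden in your parenthetical ``from $B$ (and hence from $I$)''. The honest options are either to read the hypothesis as $\alpha\notin\langaut{B}$ (which is what the paper's proof implicitly uses when it asserts that every $\pi\in\Pi_{\rho^B_\alpha}$ has a $k_\pi$), or to add the reachability assumption on $B$ and prepend the actual prefix-run of $\aut$ on $\alpha$ that put the relevant state into $B$ in the first place. As described, your argument would not go through.
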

\begin{proof}
	Assume that $\alpha \notin \langof\aut$. Since $B$ is a set of states of the
	deterministic part, we have $|\Pi_{\rho^B_\alpha}| \leq |B|$. Moreover, for
	each trace $\pi\in \Pi_{\rho^B_\alpha}$ there is some $k_\pi$ s.t. $\pi_\ell
	\notin \accstates$ and $\pi_{\ell+1} \notin \acctrans(\pi_\ell, \alpha_\ell)$ for each
	$\ell \geq k_\pi$. Taking $k = \max\{ k_\pi \mid \pi\in \Pi_{\rho^B_\alpha}
	\}$ we fulfill the condition of the lemma.
	\qed
\end{proof}

\theMaxrankCorr*

\begin{proof}
	First, we prove that if $\alpha \in \langof{\aut}$, then $\alpha\notin
	\langof{\ncsbmaxrank(\aut)}$. In that case, there is a run $\rho$ on $\alpha$
	in $\aut$. Moreover, $\rho_\ell \in \stdet$ for some $\ell \in \omega$.
	Therefore, for every run $R = (N_1,C_1,S_1,B_1)\dots$ on $\alpha$ in
	$\ncsbmaxrank(\aut)$, we have that either $\rho_\ell \in S_\ell$ or $\rho_\ell
	\in C_\ell$. Now assume the first case, $\rho_\ell\in S_\ell$. At some point,
	we reach an accepting state in $\rho$ ($\rho_k \in \accstates$, $k\geq \ell$) or an
	accepting transition ($\rho_{k+1} \in\acctrans(\rho_k, \alpha_k)$, $k\geq
	\ell$). Hence $\rho_k \in S_k$ means that $R$ is a finite trace of at most
	$k-1$ elements (recall that $S\cap \accstates = \emptyset$ and the
	condition~$\acctrans(S,a) = \emptyset$). Now we turn to the second case, $\rho_\ell \in
	C_\ell$. In that case, either $\rho_l \in C_l$ and $\rho_l \in B_l$ for each
	$l \geq l_0 \geq \ell$, or we apply $\gamma_2$ and move $\rho$ to $S$, i.e.,
	$\rho_m \in S_m$ for some $m \geq \ell$. In the first case, $B$ is not empty
	anymore, hence $R$ is not accepting. In the later, we get the case examined
	before yielding to a finiteness of $R$. Hence, $\alpha\notin
	\langof{\ncsbmaxrank(\aut)}$.

	Now, we prove that if $\alpha \notin \langof{\aut}$, then $\alpha\in
	\langof{\ncsbmaxrank(\aut)}$. We construct an accepting run $R$ on $\alpha$ in
	$\aut$. Let $R_0 = (N_1, C_1, S_1, B_1) = (\stnondet\cap I, \stdet\cap I, \emptyset,
	\stdet \cap I)$. From Lemma~\ref{lem:sd-b} we have that there is a $k_1$ s.t.
	$\forall \ell \geq k_1: (\rho^{B_1}_\alpha)_\ell \cap \accstates = \emptyset \wedge
	\acctrans((\rho^{B_1}_\alpha)_\ell, \alpha_\ell) = \emptyset$. We set $R_{i+1}
	= \gamma_1(R_{i})$ for $1 \leq i < k$. Further, we set $R_{k+1} =
	\gamma_2(R_k)$. Then, in the same sense we use Lemma~\ref{lem:sd-b} (on
	$\alpha_{k_1:\omega}$) to obtain a position $k_2$ giving us the point where
	$\gamma_2$ is applied. Such constructed run $R$ is infinite, because
	Lemma~\ref{lem:sd-b} ensures that we cannot reach an accepting state or an
	accepting transition from $S$ on $\alpha$. It remains to show that $R$ is
	accepting. From the construction, we have that $\gamma_2$ was used infinitely
	many times (and each successor of $\gamma_2$ is an accepting state). Hence the run contains infinitely many accepting states.
	%
	%
	\qed
\end{proof}

\end{document}